\documentclass[copyright,creativecommons]{eptcs}

\newif\iflong\longfalse  

\newif\ifworking\workingfalse 

\usepackage{breakurl}             

\usepackage{amsmath}
\usepackage{amsfonts}
\usepackage{amssymb}

\usepackage{tikz}
\usetikzlibrary{arrows}

\newcommand{\finally}{\Diamond}
\newcommand{\globally}{\Box}
\newcommand{\until}{\textit{U}}
\newcommand{\weakuntil}{\textit{W}}

\newcommand{\reals}{\mathbb{R}}
\newcommand{\nats}{\mathbb{N}}

\usepackage[latin1]{inputenc}
\usepackage{latexsym}
\usepackage{alltt}
\usepackage{shortvrb}  
\usepackage{dsfont}    

\newcommand{\clockvaluesem}{\textit{clock}}

\newcommand{\cpq}{c_{\textit{BR}}}
\newcommand{\cp}{c_{\textit{MS}}}

\newcommand{\transBR}{\textit{BR}}
\newcommand{\transMS}{\textit{MS}}
\newcommand{\LBR}{L_{\Pi}}
\newcommand{\tildeLBR}{\widetilde{L}_{\Pi}}
\newcommand{\LMS}{L_{\Pi}}
\newcommand{\tildeLMS}{\widetilde{L}_{\Pi}}

\newcommand{\buntil}{\;U_{\leq r}\;}

\newcommand{\beventually}{\Diamond_{\leq r}\;}

\newcommand{\balways}{\Box_{\leq r}\;}

\usepackage{amsthm}

\newtheorem{theorem}{Theorem}[section]

\newtheorem{lem}[theorem]{Lemma}

\newtheorem{definition}[theorem]{Definition}

\ifworking
\newcounter{todo}
\setcounter{todo}{0}
\setlength{\marginparwidth}{2cm}
\newcommand{\commentary}[1]{{\stepcounter{todo}\marginpar{\footnotesize #1}}}
\else
\newcommand{\commentary}[1]{}
\fi


\title{Model Checking Classes of Metric LTL Properties of Object-Oriented Real-Time Maude Specifications}

\author{%
Daniela Lepri
\institute{University of Oslo, Norway}
\and
Peter Csaba {\"O}lveczky
\institute{University of Oslo, Norway}
\and
Erika \'Abrah\'am
\institute{RWTH Aachen University, Germany}
}

\begin{document}

\MakeShortVerb{\@}     

\maketitle

\begin{abstract}
This paper presents a  transformational approach for model
checking two important classes of metric temporal logic (MTL) properties, 
namely, bounded response
and minimum separation, for  non-hierarchical object-oriented Real-Time Maude
specifications. 
We prove the correctness of our model checking algorithms, which terminate
under reasonable non-Zeno-ness assumptions 
when the reachable state space is finite.
These new model checking features have been
 integrated into Real-Time Maude, and are used to 
 analyze a network of medical devices and  a 4-way traffic  intersection system.
\end{abstract}
\section{Introduction}

Real-Time Maude~\cite{journ-rtm}  is a formal specification language and 
a high-performance  simulation and model checking 
tool that extends the rewriting-logic-based
Maude system~\cite{maude-book} to support the formal specification
and analysis of  \emph{real-time} systems. 
Real-Time Maude differs from timed-automaton-based tools, such as 
{\sc Uppaal}~\cite{uppaalTutorial} and {\sc Kronos}~\cite{kronos}, 
by emphasizing ease and expressiveness of specification
over   algorithmic decidability of key properties. In particular, 
Real-Time Maude supports the definition of any computable data type,
unbounded data structures, different  communication models, and so on.

Because of its expressiveness, Real-Time Maude has been successfully
applied to a wide range of advanced state-of-the-art applications that are beyond the pale
of timed automata, including 
the OGDC density control~\cite{ogdc-tcs} and LMST topology control~\cite{mike-wsn} 
protocols for wireless sensor networks,   
the CASH scheduling algorithm  with  capacity sharing
features that require unbounded queues~\cite{fase06}, the AER/NCA
active networks multicast protocol~\cite{aer-journ}, and the NORM multicast protocol
developed by the IETF~\cite{norm-paper}.  Real-Time Maude's natural model of time, together with its
expressiveness, also makes it ideal as a semantic framework in which
 real-time modeling languages can be given a formal semantics; such languages
then also get Real-Time Maude's
formal analysis capabilities essentially for free. Languages with a  Real-Time Maude 
semantics include: a timed extension of the Actor model~\cite{rtactors-in-rtmaude}, 
the Orc web services orchestration language~\cite{musab-orc07}, 
a language 
developed at DoCoMo laboratories for handset applications~\cite{musab-fase09},
a behavioral subset of the avionics standard AADL~\cite{aadl-fmoods},
the visual model transformation language e-Motions~\cite{emotions-wrla10},
 real-time model transformations in MOMENT2~\cite{fase10},
and a  subset of  Ptolemy II  discrete-event models~\cite{icfem09}.   

Real-Time Maude is particularly suitable to model real-time systems in
an \emph{object-oriented} style, and the paper~\cite{journ-rtm} identifies 
some useful specification techniques for  object-oriented 
 real-time systems. All the concrete applications mentioned above, and many of
the language semantics applications, are specified in an object-oriented 
way using those techniques. 

Real-Time Maude provides a spectrum of analysis methods, including
simulation through timed rewriting, untimed temporal logic model
checking, and (unbounded or time-bounded) search for reachability
analysis. However, up to know, Real-Time Maude has lacked the ability
to model check \emph{timed} (or \emph{metric}) temporal logic
properties. Such properties are obviously very important in many
real-time systems. For example, in case of an accident the airbag must
not just inflate \emph{eventually}, but within very tight time bounds.
For timed automata, such metric temporal logic model checking is
decidable\footnote{for finite behaviour, see, e.g., \cite{bouyer:phd}}, and
implemented in the {\sc Kronos} tool~\cite{kronos}.
For the much more expressive Real-Time Maude formalism,
supporting metric temporal logic checking, is obviously a much harder task.

This paper reports on our first attempts at providing metric temporal logic
model checking for Real-Time Maude. 
We have taken the following pragmatic choices:
\begin{enumerate}
\item Supporting the model checking of 
only a few classes of metric temporal logic properties, namely, the ones that
were needed in the above-mentioned applications. These 
properties are: 
\begin{itemize} \item \emph{Bounded response}: each $p$-state must be
  followed by a $q$-state within time $r$ (where $p$ and $q$ are state
  propositions).  One example of a bounded response property is
  ``whenever the ventilator assisting the patient's breathing is
  turned off, it must be turned on within 5 seconds''.
\item \emph{Minimum separation}: there must be at least time $r$
  between two non-consecutive $p$-states. For example, ``the
  ventilator should be turned on continuously for at least two minutes
  between two pauses.''
\end{itemize}
\item 
Supporting  such model checking only for flat object-oriented models
specified according to the guidelines mentioned above. But as already said,  
this class of systems includes all
the concrete Real-Time Maude 
applications listed above. 
\end{enumerate}
What is gained by restricting the classes of systems and properties is
\emph{efficiency}.  Instead of implementing the model checking
algorithms from scratch, we pursue a \emph{transformational} approach,
where we take advantage of Maude's high performance analysis commands
and transform a metric model checking problem $\mathcal{R}, L, t_o
\models \phi$ into a problem $\widetilde{\mathcal{R}}, \widetilde{L},
\widetilde{t_o} \models \widetilde{\phi}$ that can be analyzed by
Real-Time Maude's efficient search and LTL model checking commands. Our transformations add a
clock which measures, respectively, the time since the earliest
$(p\wedge \neg q)$-state that has not been followed by a $q$-state
(for bounded response) and the last time since we saw a $p$-state (for
minimum separation).  An important property is that -- under
reasonable time-divergence assumptions about the executions with the
selected time sampling strategy -- if the original reachable state
space is finite, then the model checking commands are guaranteed to
terminate.  Furthermore, our model checking commands
are semi-decision procedures for the invalidity of the metric properties
for time-diverging systems. 
The transformations have been implemented in Real-Time
Maude and the corresponding model checking commands have been made
available in the tool.  We have applied the new commands on two case
studies, one on the safe interoperation of medical
devices~\cite{phuket08} and one on a fault-tolerant controller for
traffic lights in an intersection~\cite{traffic-light}.

We prove the correctness of the transformation under reasonable
assumptions, such as the real-time rewrite theory
being
\emph{tick-invariant}~\cite{wrla06}. Since real-time rewrite
theories do not have a ``region-automaton''-like discrete quotient,
for dense time Real-Time Maude uses \emph{time sampling strategies} to
execute the tick rules. That is, in model checking analyses for
dense-time models, only a subset of all possible behaviors are
analyzed. Therefore, Real-Time Maude analyses are in general not
(both) sound and complete; however, for object-oriented specifications
we have identified easily checkable conditions that guarantee
soundness and completeness of our analyses also for dense-time
systems~\cite{wrla06}.

This paper is organized as follows. Section~\ref{sec:preliminaries}
introduces Real-Time Maude and metric temporal~logic.
Section~\ref{sec:implementation} presents the properties that we
address and the corresponding transformations, whose correctness is
proved in Section~\ref{sec:proof}.  Section~\ref{sec:case_studies}
shows two case studies of metric temporal logic model checking in
Real-Time Maude.  Section~\ref{sec:related} discusses related work,
and Section~\ref{sec:conclusion} gives some
concluding remarks.
\section{Preliminaries}
\label{sec:preliminaries}

\subsection{Real-Time Maude}
\label{sec:rtm}

In Real-Time Maude~\cite{journ-rtm},
real-time systems are modeled by a set of \emph{equations} and 
\emph{rewrite rules}. The rewrite rules are divided into \emph{instantaneous}
rules, that model changes that are assumed to take zero time, and 
\emph{tick} rules that model time advance. 
Formally, a Real-Time Maude \emph{timed module} specifies a  \emph{real-time
rewrite theory}~\cite{OlvMesTCS}  of the form
 $\mathcal{R}=(\Sigma, E, \mathit{IR}, \mathit{TR})$, where:
\begin{itemize}
\item $(\Sigma, E)$ is a \emph{membership equational
    logic}~\cite{maude-book} theory with $\Sigma$ a
  signature\footnote{That is, $\Sigma$ is a set of declarations of
    \emph{sorts}, \emph{subsorts}, and \emph{function symbols}.} and
  $E$ a set of {\em confluent and terminating conditional equations}.
  $(\Sigma, E)$ specifies the system's state space as an algebraic
  data type, and must contain a specification of a sort @Time@
  modeling the (discrete or dense) time domain. We denote by
  $\mathbb{T}_{\cal{R},\texttt{s}}$ all ground terms of sort
  \texttt{s}.

\item  $\mathit{IR}$
is a set of (possibly conditional) \emph{labeled instantaneous (rewrite)
  rules} specifying  
the system's \emph{instantaneous} (i.e., zero-time) local transitions,  written 
$@crl [@l@] : @t@ => @t'@ if @cond$,
where  $l$ is a \emph{label}. 
Such a rule specifies a \emph{one-step transition}
from an instance of $t$ to the corresponding 
instance 
of $t'$. 
 The rules are applied \emph{modulo} the
equations~$E$.\footnote{$E$  is a union $E'\cup
  A$, where $A$ is a set of equational axioms such as associativity,
  commutativity, and identity, so that deduction is performed \emph{modulo}
  $A$. Operationally,  a
term is reduced to its
$E'$-normal form modulo $A$ before any rewrite rule is applied.} 

\item $\mathit{TR}$ is a set of \emph{tick (rewrite) rules}, written
  with syntax  
\vspace{-1mm}
\begin{alltt}
 crl [\(l\)] : \texttt{\char123}\(t\)\texttt{\char125} => \texttt{\char123}\(t'\)\texttt{\char125} in time \(\tau\) if \(cond\) .
\end{alltt}
\vspace{-1mm}
that model time elapse.  @{_}@ is a 
built-in
 constructor of  sort \texttt{GlobalSystem}, and
$\tau$ is a term of sort @Time@ that denotes the \emph{duration}
of the rewrite.
\end{itemize}
The initial state must be a ground term of sort @GlobalSystem@ and 
 must be reducible to a term of
the form @{@$t$@}@ using the equations in the specification. 
The form of the tick rules ensures that time advances uniformly in the whole system.


Following~\cite{OlvMesTCS}, we write $t \stackrel{r}{\rightarrow}
t'$ when $t$ can be rewritten into $t'$ in time $r$ by a
\emph{one-step rewrite}.  Note that instantaneous steps have 
duration $0$. A \emph{(timed) path} $\pi$ in $\cal R$ is an infinite
sequence
  $$
  \pi = t_0 \stackrel{r_0}{\rightarrow} t_1
  \stackrel{r_1}{\rightarrow} t_2\ldots
  $$
  such that either
  \begin{itemize}
  \item for all $i\in\nats$, $t_i\xrightarrow{r_i} t_{i+1}$ is a one-step
    rewrite in $\mathcal{R}$;
    or
  \item there exists a $k\in\nats$ such that $\;t_i\xrightarrow{r_i}
    t_{i+1}$ is a one-step rewrite in $\mathcal{R}$ for all $0\leq i <
    k$, there is no one-step rewrite from $t_k$ in $\mathcal{R}$,
    and $t_j = t_k$ and $r_{j-1} = 0$ for each $j>k$.
  \end{itemize}
We denote by $Paths(\mathcal{R})_{t_0}$
  the set of all timed paths of $\cal R$  starting in $t_0$.   We call a path $ \pi = t_0 \stackrel{r_0}{\rightarrow} t_1
  \stackrel{r_1}{\rightarrow} t_2\ldots$ \emph{time-divergent} iff for
  all $r\in\reals$ there is an $i\in\nats$ such that $\sum_{k=0}^i r_k
  > r$. Paths that are not time-divergent are called
  \emph{time-convergent}.
  We define $\pi^k = t_k
  \stackrel{r_{k}}{\rightarrow} t_{k+1}
  \stackrel{r_{k+1}}{\rightarrow} \ldots$.  A term $t'$ is
  \emph{reachable} from $t_0$ in $\cal R$ in time $r$ iff there is a
  path $\pi = t_0 \stackrel{r_0}{\rightarrow} \ldots
  \stackrel{r_{k-1}}{\rightarrow} t_k \ldots$ with $t_k = t'$ and
  $r=\sum_{i=0}^{k-1}r_i$.

\smallskip

  The Real-Time Maude syntax is fairly intuitive; we refer
  to~\cite{maude-book} for a detailed description. For example, a
  function symbol $f$ is declared with the syntax \texttt{op }$f$ @:@
  $s_1$ \ldots $s_n$ @->@ $s$, where $s_1\:\ldots\:s_n$ are the sorts
  of its arguments, and $s$ is its (value) \emph{sort}. Equations are
  written with syntax @eq@ $t$ @=@ $t'$, and @ceq@ $t$ @=@ $t'$ @if@
  \emph{cond} are conditional equations. The mathematical variables in
  such statements are declared with the keywords {\tt var} and {\tt
    vars}.


In \emph{object-oriented} Real-Time Maude modules, a \emph{class} declaration
\begin{alltt}
 class \(C\) | \(\attone\) : \(\sone\), \dots , \(\attn\) : \(\sn\) .
\end{alltt}
declares a class $C$ with attributes $att_1$ to $att_n$ of
sorts $s_1$ 
to $s_n$, respectively. An {\em object\/} of class $C$ in a   state is
represented as a term
$@<@\: O : C \mid att_1: val_1, ... , att_n: val_n\:@>@$
of sort @Object@, where $O$, of sort @Oid@,  is the
object's
\emph{identifier}, and where $val_1$ to 
$val_n$ are the current values of the attributes $att_1$ to
$att_n$, respectively.
 In a \emph{concurrent} object-oriented
system, the 
 state
 is a term of 
sort @Configuration@. It  has 
the structure of a  \emph{multiset} made up of objects and messages.
Multiset union for configurations is denoted by a juxtaposition
operator (empty
syntax) that is declared associative and commutative, so that rewriting is 
\emph{multiset
rewriting} supported directly in Real-Time Maude.
The dynamic behavior of concurrent
object systems is axiomatized by specifying  its 
transition patterns by  rewrite rules. For example, 
  the rule  

{\small
\begin{alltt}
  rl [l] : m(O,w) < O : C | a1 : 0, a2 : y, a3 : w >  =>
                  < O : C | a1 : T, a2 : y, a3 : y + w > dly(m'(O'),x) .  
\end{alltt}
}

\noindent defines a parametrized family of transitions
(one for each substitution instance), which can be applied
whenever the attribute @a1@ of an 
object {\tt O} of class @C@ has the value @0@,  with the effect of altering
the attributes @a1@ and @a3@ of the object. Moreover, 
a message @m@, with
parameters @O@ and @w@, is read and consumed, and a new message @m'(O')@ is sent
\emph{with delay} @x@ (see~\cite{journ-rtm}).
``Irrelevant'' attributes, such as @a2@, need not be mentioned
in a rule.

 A \emph{flat} (or \emph{non-hierarchical}) object-oriented specification is one
 where all rewrites happen in the ``outermost'' configuration; that is, no attribute value
 $t$  rewrites to some $t'\not = t$.

The specification of time-dependent behavior of object-oriented real-time systems  follows the
 techniques given in~\cite{journ-rtm}.
 Time elapse is modeled by the tick rule

{\small
\begin{alltt}
var C : Configuration .   var T : Time .
crl [\(tick\)] : \texttt{\char123}C\texttt{\char125} => \texttt{\char123}delta(C, T)\texttt{\char125} in time T if T <= mte(C) [nonexec] .
\end{alltt}
}   

\noindent The function @delta@ defines the effect of time elapse 
on a configuration, and the function @mte@ defines the
maximum amount of time that can elapse before some action must take
place. These functions distribute over the objects and messages
 in a configuration
and must be defined for all single objects and messages to 
define the timed behavior of a system. The tick rule advances time
\emph{nondeterministically} by \emph{any} amount @T@ less than or equal to
@mte(C)@. To execute
such rules, Real-Time Maude offers a choice of
\emph{time sampling strategies}, so that only \emph{some} moments in time are
visited. The choice of such strategies includes:
\begin{itemize}
\item Advancing time by a fixed amount $\Delta$ in each application of
  a tick rule.
\item The \emph{maximal} strategy, that advances
 time to the next moment when some action must be taken, as defined by @mte@.
 This corresponds to \emph{event-driven
  simulation}. 
\end{itemize}

\paragraph{Formal Analysis.} 
A Real-Time Maude specification is \emph{executable}, under reasonable conditions, 
and the tool offers a variety of formal analysis 
methods. The \emph{rewrite} command simulates \emph{one} fair 
behavior of the system \emph{up to a certain duration}.
The \emph{search}
command uses a breadth-first strategy to  analyze all possible
  behaviors of the system, by checking whether a state matching a
  \emph{pattern} and satisfying
  a \emph{condition} can be reached from the
  initial state.  Such a pattern typically describes the \emph{negation} of an
invariant, so that the search succeeds iff the invariant is violated.
The  command which searches for $n$ states
satisfying the \emph{pattern} search criterion has syntax 

\small
\begin{alltt}
 (utsearch [\(n\)] \(t\) =>* \(pattern\) such that \(cond\) .)
\end{alltt}
\normalsize

Real-Time Maude also extends Maude's \emph{linear temporal logic model
  checker} 
  to check whether
each behavior, possibly  up to a certain time bound,
  satisfies a temporal logic 
  formula.
 \emph{State propositions} are terms of sort @Prop@, and their 
semantics should be 
given by (possibly conditional) equations of the form

\small
\begin{alltt}
  \texttt{\char123}\(statePattern\)\texttt{\char125} |= \(prop\) = \(b\) 
\end{alltt} 
\normalsize

\noindent for $b$ a term of sort @Bool@, which defines the state
proposition $prop$ to hold in all states $@{@t@}@$ where $@{@t@}@$
\verb+|=+ $prop$ evaluates to @true@.  We use the notation $\Pi$ for
the set of propositions and $L_\Pi$ for the (implicit) labeling
function assigning to each state the set of propositions that hold in the state.  A temporal
logic \emph{formula} is constructed by state propositions and the 
 Boolean and temporal logic operators discussed in Section~\ref{sec:mtl}.
  The time-bounded model checking command has syntax

\small
\begin{alltt}
  (mc \(t\) |=t \(\mathit{formula}\) in time <= \(\tau\) .)
\end{alltt}
\normalsize

\noindent for initial state $t$ and temporal logic formula $\mathit{formula}$ .

Since the model checking commands execute  
tick rules
according to the chosen time sampling strategy,  only a subset of all possible
behaviors is analyzed. Therefore, Real-Time Maude analyses are in general
\emph{incomplete} for a given property.
However, in~\cite{wrla06} we have given easily checkable conditions for 
ensuring that Real-Time Maude analyses are indeed sound and complete.

It is also worth remarking that in the rest of the paper, we implicitly consider
the different analyses w.r.t.\ Real-Time Maude executions. That is, for dense time, 
by ``a 
rewrite theory $\cal R$'' in the following sections we  typically mean 
the real-time rewrite theory $\mathcal{R}^{tss}$ that has been obtained
from an original time-nondeterministic real-time rewrite theory $\mathcal{R}$
by applying the theory transformation corresponding to using the time sampling strategy $tss$ 
when executing the tick rules~\cite{journ-rtm}.

%

\subsection{Metric Temporal Logic}\label{sec:mtl}

\emph{Linear temporal logic (LTL)}~\cite{pnueli:ltl}
allows us to describe properties of paths of a given system. 
The states are labeled with elements from a finite set $\Pi$ of atomic
propositions. Besides propositions and the usual Boolean
operators, LTL formulae can be built using the temporal \emph{until}
operator. Intuitively, the formula $p\ \until\ q$ (``$p$ until $q$'')
is satisfied by a path if the property $q$ becomes valid within an
arbitrary but finite number of steps and the property $p$ constantly
holds on the path before. As syntactic sugar we define
$\finally\ p$ (``eventually $p$'', defined as $\textit{true}\ \until\
p$) that is satisfied by a path if $p$ holds somewhere on the path, and
$\globally\ p$ (``globally $p$'', defined as $\neg (\textit{true}\
\until\ (\neg p))$) expressing that $p$ holds on the whole path. The
 \emph{weak until} operator $p\ \weakuntil\ q$ is 
defined as $(p\ \until\ q) \lor (\globally\ p)$.


For time-critical systems we need more expressive power to state that
some actions should happen \emph{within some time bounds}.  There are
different extensions of LTL to capture also timed properties
(see~\cite{real-time-logics} for an overview).  In this paper, we use
the extension \emph{metric temporal logic (MTL)}~\cite{koyMTL}, that
adds time interval bounds to the temporal operators.
For the until operator, the formula $p\ \until_{[t_1,t_2]}\
q$ states that $p\ \until\ q$ holds and, furthermore, $q$ occurs within
the time interval $[t_1,t_2]$. 
%

Formulae of MTL are built using the following abstract syntax:
\[
\varphi \quad ::= \quad \textit{true} \quad | \quad p \quad | 
\quad \neg \varphi \quad | \quad \varphi \land \varphi \quad | 
\quad \varphi\ \until_{[t_1,t_2]}\ \varphi
\]
with $p\in\Pi$ and either $t_1,t_2\in\reals$ with $t_1\leq t_2$ and
$t_2>0$, or $t_1\in\reals$ and $t_2=\infty$.  Note that
$\until_{[0,\infty]}$, for which we just write $\until$,
corresponds to the unbounded until of LTL.  Besides the
usual Boolean operators $\lor,\rightarrow,\ldots$ we define as
syntactic sugar $\finally_{[t_1,t_2]}\ \varphi$ as $\textit{true}\
\until_{[t_1,t_2]}\ \varphi$ and $\globally_{[t_1,t_2]}\ \varphi$ as
$\neg(\textit{true}\ \until_{[t_1,t_2]} (\neg \varphi))$.
If the
lower bound $t_1$ is $0$, we use the notation $\varphi_1\ \until_{\leq
  t_2}\ \varphi_2$, and analogously for the other operators.


Given a real-time rewrite theory $\cal{R}$, 
the set of \emph{states} is defined as $\mathbb{T}_{\Sigma/E,\texttt{GlobalSystem}}$.
A set $\Pi$ of (possibly
parametric) {\em atomic propositions} on those states can
be defined equationally in a protecting extension $(\Sigma \cup \Pi,E \cup D)
\supseteq (\Sigma,E)$, and give rise to a {\em labeling function} 
$L_{\Pi} : \mathbb{T}_{\Sigma/E,\texttt{GlobalSystem}} \rightarrow \mathcal{P}(\Pi)$ 
 in the obvious way~\cite{maude-book}.
%
%
Adapting the pointwise semantics for MTL given in~\cite{real-time-logics},
we can define satisfaction of MTL formulas for real-time 
rewrite theories over timed paths as follows: 
%



\begin{definition}
  Let $\cal{R}$ be  a real-time rewrite theory, $L_{\Pi}$ a labeling
  function on $\cal R$, and $\pi = t_0
  \stackrel{r_0}{\rightarrow} t_1 \stackrel{r_1}{\longrightarrow}
  \ldots $ a timed path  in  $\cal R$.  
  The satisfaction relation of an MTL formula $\phi$ for the
  path $\pi$ in $\cal R$
  is then defined recursively as follows:\\[1ex]
\begin{tabular}[h]{lp{11cm}}
  ${\cal R},L_{\Pi},\pi\models\textit{true}$ & always holds\\
  ${\cal R},L_{\Pi},\pi\models p$ & iff  $p\in L_{\Pi}(t_0)$\\  
  ${\cal R},L_{\Pi},\pi\models \neg \varphi$ & iff 
  ${\cal R},L_{\Pi},\pi\not\models\varphi$\\
  ${\cal R},L_{\Pi},\pi\models \varphi_1\land\varphi_2$ & iff 
  ${\cal R},L_{\Pi},\pi\models\varphi_1$ and ${\cal R},L_{\Pi},\pi\models\varphi_2$\\
  ${\cal R},L_{\Pi},\pi\models \varphi_1\ \until_{[r_a,r_b]}\ \varphi_2$ & 
  iff  there exists a $j\in\nats$ such that 
  ${\cal R},L_{\Pi},\pi^j\models\varphi_2$, \newline
  ${\cal R},L_{\Pi},\pi^i \models \varphi_1$ for all  
  $0\leq i < j$, and $r_a \leq \sum_{k=0}^{j-1}r_k \leq r_b$.
\end{tabular}\\[1ex]  
For a state $t_0$ of sort $\mathtt{GlobalSystem}$, the
satisfaction relation of an MTL formula $\phi$ for the state
$t_0$ in $\cal R$ is defined as:
\[
\mathcal{R},L_{\Pi},t_0 \models \phi \;\iff
\; \forall \pi\in Paths(\mathcal{R})_{t_0} \quad
\mathcal{R},L_{\Pi},\pi \models \phi
\]
\end{definition}

\section{Model Checking  MTL Properties of Object-Oriented Specifications}
\label{sec:implementation}

Real-Time Maude currently does not support MTL model checking. However, some
 MTL formulas can already be model checked in Real-Time Maude
using the \emph{time-bounded} search and LTL  model checking   commands. 
For example, we 
 can model check the
time-bounded 
until property $\mathcal{R},L_{\Pi},t_0
\models p \buntil q$, for $p$ and $q$ \emph{state} properties from $\Pi$,
 using the time-bounded model checking
command 

\small
\begin{alltt}
(mc t0 |=t \(p\) U \(q\) in time <= \(r\) .)
\end{alltt}
\normalsize
 
\noindent We can also analyze the properties $\mathcal{R},L_{\Pi},t_0 \models \balways p$
and $\mathcal{R},L_{\Pi},t_0 \models \beventually p$ in a similar way.

In this paper we present analysis algorithms for the following two
classes of MTL formulae:
\begin{enumerate}
\item \emph{Bounded response:}
  $\globally\ (p \rightarrow (\finally_{\leq r}\ q))$
\item \emph{Minimum separation:} $\globally\ (p \rightarrow
  (p\ \weakuntil\ (\globally_{\leq r}\ \neg p)))$
\end{enumerate}

We propose to transform an MTL model checking problem
$\mathcal{R},L_{\Pi}, t_0 \models \varphi$
 into an untimed LTL model
checking problem 
$\widetilde{\mathcal{R}},\widetilde{L}_{\Pi}, \widetilde{t_0} \models \widetilde{\varphi}$.
Both transformations add a \emph{clock} to the system: for model checking
bounded response properties, this clock measures the time since $p$
held without $q$ holding in the meantime; for minimum separation
properties, the clock measures the distance between two
non-consecutive $p$-states. We take care not to  increase the clocks 
``unnecessarily,'' so that if the state space reachable from 
$t_0$ in $\mathcal{R}$ is finite, then the state space reachable from 
$\widetilde{t_0}$ in $\widetilde{\mathcal{R}}$ remains finite, under reasonable 
time-divergence assumptions on the executions.

We assume that our specifications are \emph{tick-invariant}~\cite{wrla06}
with regard to the state propositions occurring in the formula, i.e.,
a tick step does not change the valuation of the atomic propositions
occurring in the formula.
Most systems, including the two case studies in the paper, 
satisfy tick-invariance, since the  state propositions usually do
not involve  the value of clock and timer attributes in the system.

%

\subsection{Bounded response: $\globally\ (p \rightarrow \finally_{\leq r}\ q)$}
\label{sec:bounded}

A bounded response property states that the system always reacts to a
request $p$ with an action $q$ within time $r$. For example, in our 
medical devices case study, 
the ventilation machine, helping a  sedated patient to breathe,
should not be stopped for more than two seconds at a time; that is,   each
state in which the machine is pausing must be followed by a state in 
which the machine is breathing in two seconds or less.

The MTL
model checking problem 
$$\mathcal{R},\LBR,t_0 \models \Box\: (p
\longrightarrow \Diamond_{\leq r}\: q)$$
\noindent for $p,q\in\Pi$ state
propositions, can be transformed into the untimed model checking problem
$$\widetilde{\mathcal{R}}_r,\tildeLBR,\widetilde{t_0} \models 
\Box\: (p\longrightarrow \Diamond\: q) \;\wedge\; 
\Box\:(\clockvaluesem(\cpq) \leq r)$$
\noindent where $\clockvaluesem(\cpq)$ is the value of a ``clock''
that measures the time since $p$ held without $q$ holding in the
meantime.  For real-time rewrite theories having only time-divergent
paths we could skip the first condition $\Box\: (p\longrightarrow
\Diamond\: q)$, that assures, that we also consider all
relevant time-convergent paths as possible counterexamples.

We add a ``clock'' $\cpq$ to
the system, and update it as follows:
\begin{itemize} 
\item[i)] If the clock $\cpq$ is turned off, and a state satisfying 
$p\wedge \neg q$ is reached, then the clock is set to 0 and is turned on.
\item[ii)] The clock is turned off when a state satisfying $q$ is reached.
\item[iii)] A clock that is on is increased according to the elapsed time
  in the system.
\end{itemize}

%

For the very useful class of ``flat'' object-oriented specifications
formalized according to the guidelines in~\cite{journ-rtm}---all advanced
Real-Time Maude applications have been so specified---we can automate
the transformation from $\mathcal{R},\LBR,t_0, p, q, r$ to 
$\widetilde{\mathcal{R}},\tildeLBR,\widetilde{t_0}$
as follows:
\begin{enumerate}
\item \label{lab:br_state} Add the following class for the clock:

\small
\begin{alltt}
 class Clock | \clockvalue : Time, status : OnOff .    
 sort OnOff .     ops on off : -> OnOff [ctor] .
\end{alltt}
\normalsize

\item \label{lab:br_init} Add a clock object to the initial state $\texttt{\char123}t_0\texttt{\char125}$, so that the initial state
 becomes 

\small
\begin{alltt}
\texttt{\char123}\(\tnull\)  <\! \(\cpq\)\! : Clock\! |\! \clockvalue\! :\! 0, status\! :\! \(x\)\! >\texttt{\char125}
\end{alltt}
\normalsize

\noindent where $\cpq$ is a constant of sort \texttt{Oid} and $x$ is @on@ if 
$p\in L(\texttt{\char123}t_0\texttt{\char125})$ and
$q\not\in L(\texttt{\char123}t_0\texttt{\char125})$, and is @off@ otherwise.
Note that $p\in L(\texttt{\char123}t_0\texttt{\char125})$ can be
checked in Maude by  checking whether
$\texttt{\char123}t_0\texttt{\char125}
@ |= @ p = @true@$. 
%
\item \label{lab:br_tick}  We keep Real-Time-Maude's object-oriented tick rule and extend
  the functions \texttt{delta} and \texttt{mte} to clocks as follows, ensuring that
  \texttt{mte} is not affected by the new clock object:

\small
\begin{alltt}
  eq delta(< \(\cpq\) : Clock | status : on, \clockvalue : T >, T') =
           < \(\cpq\) : Clock | \clockvalue : if T <= \(r\) then T + T' else T fi > .
  eq delta(< \(\cpq\) : Clock | status : off >, T') = < \(\cpq\) : Clock | > .
  eq mte(< \(\cpq\) : Clock | >) = INF .
\end{alltt}
\normalsize

\noindent Notice that the @delta@ function ensures that
the clock value never increases more than necessary, 
preserving \emph{finiteness} of the reachable state space from the initial state. 
\item \label{lab:br_inst} Each \emph{instantaneous} rule 
$t\; @=>@\; t' @ if @cond\;$ or $\;@{@ t@} => {@t'@} if @cond$
 in $\mathcal{R}$ is replaced
by the rules:

\small
\begin{alltt}
 \texttt{\char123}\(t\) REST < \(\cpq\) : Clock | status : on >\texttt{\char125}   
 =>  \texttt{\char123}\(t'\) REST < \(\cpq\) : Clock | >\texttt{\char125} if \texttt{\char123}\(t'\) REST\texttt{\char125} |= \(q\) =/= true and \(cond\)
\end{alltt}
\normalsize

\noindent (if the clock is on, then it continues to stay on if a state
satisfying $\neg q$ is reached);

\small
\begin{alltt}
 \texttt{\char123}\(t\) REST < \(\cpq\) : Clock | status : on >\texttt{\char125}  
 =>  \texttt{\char123}\(t'\) REST < \(\cpq\) : Clock | status : off >\texttt{\char125} if \texttt{\char123}\(t'\) REST\texttt{\char125} |= \(q\) and \(cond\) 
\end{alltt}
\normalsize

\noindent  (if the clock is on, then it is turned off when a state satisfying $q$ is reached);

\small
\begin{alltt}
 \texttt{\char123}\(t\) REST < \(\cpq\) : Clock | status : off >\texttt{\char125} 
 =>  \texttt{\char123}\(t'\) REST < \(\cpq\) : Clock | \clockvalue : 0, status : on >\texttt{\char125} 
     if  \texttt{\char123}\(t'\) REST \texttt{\char125} |= \(p\) and \texttt{\char123}\(t'\) REST\texttt{\char125} |= \(q\) =/= true and \(cond\) 
\end{alltt}
\normalsize

\noindent (if the clock is off, then it is set to 0 and turned on when a state 
	satisfying $p\wedge \neg q$ is reached);

\small
\begin{alltt}
 \texttt{\char123}\(t\) REST < \(\cpq\) : Clock | status : off >\texttt{\char125}   
 =>  \texttt{\char123}\(t'\) REST < \(\cpq\) : Clock | >\texttt{\char125}
     if  \texttt{\char123}\(t'\) REST\texttt{\char125} |= \(q\) or \texttt{\char123}\(t'\) REST\texttt{\char125} |= \(p\) =/= true and \(cond\)
\end{alltt}
\normalsize

\noindent (if the clock is off, then it continues to stay off if a state 
	satisfying $q\vee \neg p$ is reached).

In the above rules @REST@ is a variable of sort @Configuration@ that
 does not appear in the
original rule. @REST@ matches  the ``other'' objects
 and messages in the state.
\end{enumerate}

\noindent Summarizing, the \emph{\transBR-transformation}
transforms a real-time rewrite theory $\cal R$, a labeling function
$\LBR$ of $\cal R$ with $p,q\in\Pi$,  an initial state $t_0$ of $\cal R$, 
and a bounded response formula $ \Box\: (p
\longrightarrow \Diamond_{\leq r}\: q)$ into the
triplet $\widetilde{\cal R}$, $\tildeLBR$, and $\widetilde{t}_0$ by
\begin{itemize}
\item transforming $\cal R$ into $\widetilde{\cal R}$ according to the
  points \ref{lab:br_state}, \ref{lab:br_tick}, and \ref{lab:br_inst} above;
\item transforming $\LBR$ into $\tildeLBR$ by adapting its domain to
  the transformed state space, but letting the labeling otherwise
  unchanged, i.e., $\LBR(\{t\}) = \tildeLBR(\{t\ o\})$ for all states
  $t$ of $\cal R$ and all \texttt{Clock} instances $o$;
\item extending the initial state $t_0$ according to point
  \ref{lab:br_init} above, yielding $\widetilde{t}_0$.
\end{itemize}

\noindent The validity of the bounded response property $\globally\ (p \rightarrow
\finally_{\leq r}\ q)$ is equivalent to 
$\globally\ (p \rightarrow
\finally\ q)$ \emph{and}
the clock value being less than
or equal to $r$ in each reachable state of the transformed module. The latter property
can be defined as an atomic proposition

\small
\begin{alltt}
op clock`<=_ : Time -> Prop [ctor] .
eq \char123REST <\! \(\cpq\)\! :\! Clock\! |\! clock\! :\! T1\! >\char125 |= clock <= T2 = (T1 <= T2) .
\end{alltt}
\normalsize  

\noindent and hence bounded response can be analyzed using Real-Time Maude's untimed
LTL model checking features. We have implemented the above model transformation
in Real-Time Maude. We have also implemented a bounded response model checking
command in the tool based on this transformation. However, for pragmatic reasons,
we do \emph{not} model check the property $\widetilde{\mathcal{R}},\tildeLBR,\widetilde{t_0} \models 
\Box\: (p\longrightarrow \Diamond\: q) \;\wedge\; 
\Box\:(\clockvaluesem(\cpq) \leq r)$. Instead, we have observed   the unsurprising
fact that, with  time sampling strategy executions, all our large Real-Time Maude
applications are modeled as time-diverging theories. In these cases, bounded
response reduces to checking  $\widetilde{\mathcal{R}},\tildeLBR,\widetilde{t_0} \models 
\Box\:(\clockvaluesem(\cpq) \leq r)$, which  can be analyzed by the following search command
that searches for a state in which the clock value is greater than
$r$:

\small
\begin{alltt}
(utsearch [1] \texttt{\char123}\(\tnull\) <\! \(\cpq\)\! :\! Clock\! |\! \clockvalue\! :\! 0, status\! :\! \(x\)\! >\texttt{\char125} =>*
              \texttt{\char123}C:Configuration <\! \(\cpq\)\! :\! Clock | \clockvalue\! :\! T:Time\! >\texttt{\char125} such that T:Time > \(r\) .)
\end{alltt}
\normalsize

\noindent where $x$ is @on@ if $p\in L(\{t_0\})$ and $q\not\in
L(\{t_0\})$, and is @off@ otherwise.  The practical difference is
that, whereas the LTL model checking does not terminate when the state
space reachable from $t_0$ in $\mathcal{R}$ is infinite, the above
search command provides a \emph{semi-decision} procedure for the
invalidity of the bounded response property. For an example of the benefit of this
time-divergence-assuming implementation, consider the bounded response
analysis of the medical systems example in
Section~\ref{sec:case_studies}. The reachable state space is infinite
because of the clock used in the original model; hence any direct LTL
model checking would not terminate, but we see that our bounded
response command indeed returns a counterexample falsifying the
bounded response property.

In our tool, the bounded response model checking command (for the automatic 
\transBR-transformation and the execution the Real-Time Maude search) is written with
syntax

\small
\begin{alltt}
(br \(\tnull\) |= \(p\) => <>le( \(r\) ) \(q\) .)
\end{alltt}
\normalsize



\subsection{Minimum Separation: $\globally\ (p \rightarrow (p\ \weakuntil\
  \ \globally_{\leq r}\ \neg p))$}

Given a real-time rewrite theory $\cal R$ with a labeling function
$\LMS$,  $p\in\Pi$, all runs of $\cal R$
are made up of a sequence of blocks for which $p$ and $\neg p$ hold
alternatingly (see Figure~\ref{fig:chain}). The minimum separation
property requires that each $\neg p$-block occurring after a $p$-block
must have a minimum duration $r$.  I.e., if the run for which we check
the property starts with a $p$-block, then all $\neg p$-blocks of the
run must have a duration at least $r$.  Otherwise, if the run starts
with a $\neg p$-block, then the same holds for all $\neg p$-blocks
except the first one at the beginning of the run. 

  \begin{figure}[htb]
    \centering

\begin{tikzpicture}[]
    \node[] (p0) at (0,0) {};
    \node[] (p1) at (2,0) {};
    \node[] (p2) at (5,0) {};
    \node[] (p3) at (6,0) {};
    \node[] (p4) at (10,0) {};
    \node[] (p4b) at (9.5,0) {};
    \node[] (p5) at (13,0) {};
    \path[thick,|-|] (p0) edge node[below] {$\underbrace{\hspace*{1.7cm}}_{p}$} (p1)
                   edge node[below] {\hspace*{1.7cm}$\underbrace{\hspace*{3cm}}_{\neg p}$} (p2)
                   edge node[below] {\hspace*{4.7cm}$\underbrace{\hspace*{1cm}}_{p}$} (p3)
                   edge node[below] {\hspace*{5.7cm}$\underbrace{\hspace*{4cm}}_{\neg p}$} (p4);
    \path[thick,->] (p4b) edge node[below] {${\ldots}$} (p5);
    \node[] (pp1) at (1.75,0.3) {};
    \node[] (pp2) at (5,0.3) {};
    \node[] (pp3) at (5.75,0.3) {};
    \node[] (pp4) at (10,0.3) {};
    \path[<->] (pp1) edge node[above] {$\geq r$} (pp2);
    \path[<->] (pp3) edge node[above] {$\geq r$} (pp4);

\end{tikzpicture}

\begin{tikzpicture}[]
    \node[] (p0) at (0,0) {};
    \node[] (p1) at (1.5,0) {};
    \node[] (p2) at (4,0) {};
    \node[] (p3) at (6,0) {};
    \node[] (p4) at (9,0) {};
    \node[] (p5) at (11,0) {};
    \node[] (p5b) at (10.75,0) {};
    \node[] (p6) at (13,0) {};
    \path[thick,|-|] (p0) edge node[below] {$\underbrace{\hspace*{1.2cm}}_{\neg p}$} (p1)
                   edge node[below] {\hspace*{1.2cm}$\underbrace{\hspace*{2.5cm}}_{p}$} (p2)
                   edge node[below] {\hspace*{3.7cm}$\underbrace{\hspace*{2cm}}_{\neg p}$} (p3)
                   edge node[below] {\hspace*{5.7cm}$\underbrace{\hspace*{3cm}}_{p}$} (p4)
                   edge node[below] {\hspace*{8.7cm}$\underbrace{\hspace*{2cm}}_{\neg p}$} (p5);
    \path[thick,->] (p5b) edge node[below] {${\ldots}$} (p6);

    \node[] (pp2) at (3.75,0.3) {};
    \node[] (pp3) at (6,0.3) {};
    \node[] (pp4) at (8.75,0.3) {};
    \node[] (pp5) at (11,0.3) {};
    \path[thick,<->] (pp2) edge node[above] {$\geq r$} (pp3);
    \path[thick,<->] (pp4) edge node[above] {$\geq r$} (pp5);
\end{tikzpicture}

\caption{The form of runs satisfying the minimum separation property
  $\globally\ (p \rightarrow (p\ \weakuntil\ \ \globally_{\leq r}\ \neg
  p))$. The $p$- and $\neg p$-blocks may also be infinite.}
    \label{fig:chain}
  \end{figure}
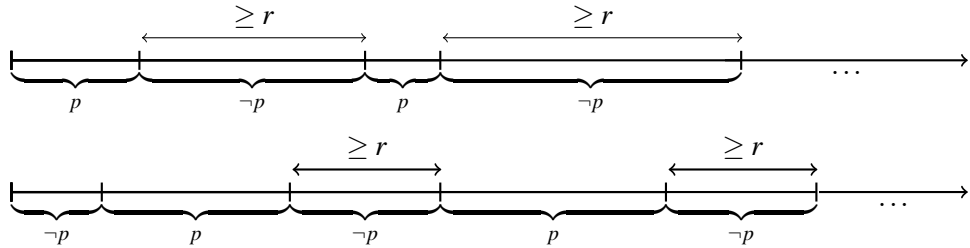

  We transform the MTL model checking problem
$$\mathcal{R},\LMS,t_0 \models \globally\ (p \rightarrow (p\ \weakuntil\ \  \globally_{\leq r}\ \neg p))$$
\noindent into the untimed model checking
problem
$$\widetilde{\mathcal{R}},\tildeLMS,\widetilde{t_0} \models 
\Box\:(status(\cp) = @on@ \;\vee\; \clockvaluesem(\cp) \geq r)$$
\noindent where $\clockvaluesem(\cp)$ is the value of a ``clock'' that
measures the time duration since we saw a $p$-state. That means, to model
check minimum separation properties, we add a ``clock'' $\cp$ to the
system, which is initially turned off and set to $r$: in this way we
ensure that an eventual initial $\neg p$-block does not cause a violation of the property.
We update the clock as follows:
\begin{enumerate} 
\item[i)] If we move from a $p$-state to a $\neg p$-state, then the clock is turned on and reset to 0.
\item[ii)] The clock is turned off when a state satisfying $p$ is reached.
\item[iii)] A clock that is on is increased according to the elapsed time
  in the system.
\end{enumerate}

We can automate the transformation to search for counterexamples of a minimum separation
property of the above form as follows:
\begin{enumerate}
\item \label{lab:ms_state} Add the same class for the clock  as in Section~\ref{sec:bounded}:   

\small
\begin{alltt}
 class Clock | \clockvalue : Time, status : OnOff .    
\end{alltt}
\normalsize

\item \label{lab:ms_init} Add a clock object to the initial state 
 $\texttt{\char123}t_0\texttt{\char125}$, yielding

\small
\begin{alltt}\small
\texttt{\char123}\(\tnull\) <\! \(\cp\)\! :\! Clock\! |\! \clockvalue\! :\! \(r\), status\! :\! off\! >\texttt{\char125}
\end{alltt}
\normalsize 

\noindent where $\cp$ is a constant of sort \texttt{Oid}. 

\item \label{lab:ms_tick} We keep Real-Time-Maude's object-oriented tick rule and extend
  the function \texttt{delta} and \texttt{mte} to clocks exactly 
  as in Section~\ref{sec:bounded}.   
\item  \label{lab:ms_inst} Each \emph{instantaneous} rule $t\; @=>@\;
 t' @ if @cond$ or $\;@{@ t@} => {@t'@} if @cond$
 in $\mathcal{R}$ is replaced
by the  rules:

\small
\begin{alltt}\small
 \texttt{\char123}\(t\) REST < \(\cp\) : Clock | status : on >\texttt{\char125}   
 =>  \texttt{\char123}\(t'\) REST < \(\cp\) : Clock | >\texttt{\char125} if \texttt{\char123}\(t'\) REST\texttt{\char125} |= \(p\) =/= true and \(cond\)
\end{alltt}
\normalsize

\noindent (if the clock is on, then it continues to stay on, if a state
satisfying $\neg p$ is reached);

\small
\begin{alltt}
 \texttt{\char123}\(t\) REST < \(\cp\) : Clock | status : on >\texttt{\char125}   
 =>  \texttt{\char123}\(t'\) REST < \(\cp\) : Clock | status : off >\texttt{\char125} if \texttt{\char123}\(t'\) REST\texttt{\char125} |= \(p\) and \(cond\)
\end{alltt}
\normalsize

\noindent (if the clock is on, then it is turned off when a state satisfying $p$ is reached);

\small
\begin{alltt}
 \texttt{\char123}\(t\) REST < \(\cp\) : Clock | status : off >\texttt{\char125}   
 =>  \texttt{\char123}\(t'\) REST < \(\cp\) : Clock | >\texttt{\char125} 
     if (\texttt{\char123}\(t\) REST\texttt{\char125} |= \(p\) =/= true or \texttt{\char123}\(t'\) REST\texttt{\char125} |= \(p\)) and \(\)\(cond\)
\end{alltt}
\normalsize

\noindent (the clock remains off, if either we are in a state satisfying $\neg p$ or we move
to a state satisfying $p$; the first condition is needed to avoid  switching the clock on in initial $\neg p$-blocks);

\small
\begin{alltt}
 \texttt{\char123}\(t\) REST < \(\cp\) : Clock | status : off >\texttt{\char125}   
 =>  \texttt{\char123}\(t'\) REST < \(\cp\) : Clock | status : on, \clockvalue : 0 >\texttt{\char125} 
     if \texttt{\char123}\(t\) REST\texttt{\char125} |= \(p\) and \texttt{\char123}\(t'\) REST\texttt{\char125} |= \(p\) =/= true and \(cond\)
\end{alltt}
\normalsize

\noindent (if the clock is off, and we move from a state satisfying $p$ to a 
state satisfying $\neg p$, then the clock is turned on and reset to 0).

  Again, \texttt{REST} is a variable of sort \texttt{Configuration} that
  does not appear in the original rule.  
\end{enumerate}

\noindent The \emph{\transMS-transformation} therefore 
transforms a real-time rewrite theory $\cal R$, a labeling function
$\LMS$  with $p\in\Pi$, an initial state $t_0$ of $\cal R$, a state proposition $p$, and a time value $r$
into the
triple $\widetilde{\cal R}$, $\tildeLMS$, and $\widetilde{t}_0$ by
\begin{itemize}
\item transforming $\cal R$ into $\widetilde{\cal R}$ according to the
  points \ref{lab:ms_state}, \ref{lab:ms_tick}, and \ref{lab:ms_inst} above;
\item transforming $\LMS$ into $\tildeLMS$ by adapting its domain to
  the transformed state space, but letting the labeling otherwise
  unchanged, i.e., $\tildeLMS(\{t\ o\}) = \LMS(\{t\})$ for all states
  $t$ of $\cal R$ and all \texttt{Clock}s  $o$;
\item extending the initial state $t_0$ according to point
  \ref{lab:ms_init} above, yielding $\widetilde{t}_0$.
\end{itemize}

\noindent Checking the minimum separation property $\globally\ (p
\rightarrow (p\ \weakuntil\ \ \globally_{\leq r}\ \neg p))$ is
equivalent to checking that the validity of $p$ implies that the clock
value is larger than or equal to $r$ in each state in the transformed
module. The violation of the latter can be checked by the following
search command that searches for a state in which the clock is off (which
implies that $p$ holds) and the clock value is smaller than $r$:

\small
\begin{alltt}
(utsearch [1] \texttt{\char123}\(\tnull\)  < \(\cp\) : Clock | \clockvalue : \(r\), status : off >\texttt{\char125} =>*
              \texttt{\char123}C:Configuration  < \(\cp\) : Clock | \clockvalue : T:Time, status : off >\texttt{\char125}
                such that T:Time < \(r\) .)
\end{alltt}
\normalsize

The above \transMS-transformation has been integrated in Real-Time
Maude,  and model checking  the
above minimum separation property can be done with the Real-Time Maude
command

\small
\begin{alltt}
(ms \(\tnull\) |= \(p\) separated by >= \(r\) .)
\end{alltt}
\normalsize

\iflong
\section{Correctness of the Transformations}
\label{sec:proof}

In this section we give the correctness proofs for the
$\transBR$-transformation and the $\transMS$-transformation. 

\else
\section{Correctness of  Bounded Response Model Checking}
\label{sec:proof}

In this section we give the correctness proof for 
our bounded response model checking.  The correctness proof for minimum separation, 
which we omit due to lack of space, is quite similar, and can be found
in an extended version of this paper~\cite{MSproof}.

\fi

To increase readability, in the following we use the notation 
$\pi\models\phi$ instead of $\mathcal{R},L_{\Pi},\pi\models\phi$ if
$\cal R$ and $L_\Pi$ are clear from the context.

\iflong

\subsection{Bounded Response}

\fi

The following lemma states that the
\transBR-transformation only adds some observators to the original
systems, without modifying its behavior.

\begin{lem}
\label{lem:protectedBR}
Let $\cal R$ be a real-time rewrite theory,
$\LBR$ with $p,q\in\Pi$ a labeling function for $\cal R$, and
let $\{t_0\}$ be an initial state for $\cal R$. Let $\widetilde{\cal
  R}$, $\tildeLBR$, and $\{\widetilde{t}_0\}$ be the result of the
$\transBR$-transformation applied to $\cal R$, $\LBR$, and $t_0$.

Then
for each path
$
\{t_0\} \stackrel{r_0}{\rightarrow} \{t_1\}
\stackrel{r_1}{\rightarrow} \ldots
$
in $\cal R$ there is a path 
$
\{\widetilde{t}_0\} \stackrel{r_0}{\rightarrow} \{\widetilde{t}_1\}
\stackrel{r_1}{\rightarrow} \ldots 
$
in $\widetilde{\cal{R}}$ such that,
for all $i$, there exists $t_i'$ with $\widetilde{t}_i = 
t_i\ 
t_i'$ and vice versa,
for all paths
$
\{\widetilde{t}_0\} \stackrel{r_0}{\rightarrow} \{\widetilde{t}_1\}
\stackrel{r_1}{\rightarrow} \ldots 
$
in $\widetilde{\cal R}$ there is a path 
$
\{t_0\} \stackrel{r_0}{\rightarrow} \{t_1\}
\stackrel{r_1}{\rightarrow} \ldots 
$
in $\cal R$ such that, for all $i$, $\widetilde{t}_i = 
t_i\ 
t_i'$ for some $t_i'$.

\end{lem}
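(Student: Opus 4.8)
The plan is to exhibit, on the states reachable from $\widetilde t_0$, the correspondence $\{t\}\leftrightarrow\{t\ o\}$ with $o$ a single \texttt{Clock} object named $\cpq$, to show it is a timed bisimulation between $\mathcal{R}$ and $\widetilde{\mathcal{R}}$ that also matches deadlocks, and then to read off both halves of the statement by the usual path-lifting argument. The first ingredient is a \emph{structural invariant}: every state reachable from $\{\widetilde t_0\}$ in $\widetilde{\mathcal{R}}$ has the form $\{t\ o\}$ where $t$ is a configuration over the original signature and $o$ is exactly one \texttt{Clock} object with identifier $\cpq$. This holds at $\{\widetilde t_0\}=\{t_0\ c_0\}$ by point~\ref{lab:br_init}, and it is preserved by every rule of $\widetilde{\mathcal{R}}$: the tick rule sends $\{t\ o\}$ to $\{\mathrm{delta}(t,\tau)\ \mathrm{delta}(o,\tau)\}$, and $\mathrm{delta}(o,\tau)$ is again a single \texttt{Clock} object (same \texttt{status}, \texttt{clock} attribute unchanged or incremented by $\tau$) because \texttt{delta} distributes over configurations; and each of the four replacement rules for an original instantaneous rule consumes the unique clock object, matched by its explicit \texttt{Clock} subterm, and produces one clock object, while the rest of the configuration, matched by the fresh variable \texttt{REST}, is rewritten only by the image of an original rule and hence stays clock-free.

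The second ingredient is the step-by-step matching. For tick steps it is immediate: the transformation keeps the object-oriented tick rule verbatim, and the new \texttt{mte} equation is chosen (as the paper notes, so that \texttt{mte} is unaffected by the clock) so that $\mathrm{mte}(t\ o)=\mathrm{mte}(t)$; hence $\{t\}\xrightarrow{\tau}\{\mathrm{delta}(t,\tau)\}$ is a tick of $\mathcal{R}$ iff $\{t\ o\}\xrightarrow{\tau}\{\mathrm{delta}(t,\tau)\ \mathrm{delta}(o,\tau)\}$ is a tick of $\widetilde{\mathcal{R}}$, and the non-clock parts coincide. For instantaneous steps, note that the instantaneous rules of $\widetilde{\mathcal{R}}$ are precisely the four replacement rules of each original rule $l$; the first and second require \texttt{status}~\texttt{on} and carry the complementary conditions ``$\neg q$ holds on the successor state'' and ``$q$ holds on the successor state'', while the third and fourth require \texttt{status}~\texttt{off} and carry the complementary conditions ``$p\wedge\neg q$'' and ``$q\vee\neg p$''; in all four cases the remaining conjunct is exactly the original $\mathit{cond}$. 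Thus, given the \texttt{status} of $o$ and the (ground-computable) truth values of $p$ and $q$ on the successor configuration, exactly one of the four rules for $l$ is applicable to $\{t\ o\}$, it is applicable iff $l$ is applicable to $\{t\}$, and it yields $\{t'\ o'\}$ with the same successor configuration $t'$ as in $\mathcal{R}$ and $o'$ the correspondingly updated clock. With these two ingredients the forward direction is a straightforward induction that builds $\{\widetilde t_i\}=\{t_i\ t_i'\}$ step by step, and the backward direction is its mirror image, the structural invariant supplying the decomposition $\{\widetilde t_i\}=\{t_i\ t_i'\}$ and the rule matching running the other way. The zero-time stutter after a deadlock (the second clause of the path definition) is matched symmetrically: a replacement rule is enabled on $\{t\ o\}$ iff its $\mathit{cond}$ conjunct holds and hence $l$ is enabled on $\{t\}$, and a tick is enabled on $\{t\ o\}$ iff it is on $\{t\}$, so $\{t\}$ deadlocks in $\mathcal{R}$ iff $\{t\ o\}$ deadlocks in $\widetilde{\mathcal{R}}$; moreover $\{t\ o\}$ determines $\{t\}$ uniquely (the identifier $\cpq$ occurs exactly once), so the stuttering tails correspond.

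I expect the only delicate point to be the bookkeeping around matching modulo the associativity, commutativity and identity axioms of configuration union: one must identify, in both directions, the sub-configuration matched implicitly by an original rule with the binding of the explicit \texttt{REST} variable in the replacement rules, and this identification rests on the structural invariant that exactly one \texttt{Clock} object is present, so that \texttt{REST} captures precisely the clock-free remainder. Once that is in place, verifying exclusivity and exhaustiveness of the four conditions is a short case split on the truth of $p$ and $q$ at the successor state, and the $\mathrm{delta}$ and $\mathrm{mte}$ computations together with the inductive assembly of the paths are routine.
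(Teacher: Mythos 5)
Your proposal is correct and follows essentially the same route as the paper's proof: establish that the clock object does not perturb \texttt{mte} or the original part of the state, observe that the four replacement rules are pairwise exclusive and jointly exhaustive given the clock status and the truth values of $p$ and $q$ (so exactly one applies precisely when the original rule does), and then lift paths in both directions by induction. You are in fact somewhat more explicit than the paper about the structural invariant (exactly one \texttt{Clock} object in every reachable state), the matching modulo the configuration axioms, and the deadlock/stuttering clause of the path definition, all of which the paper leaves implicit.
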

\begin{proof}
  Adding the clock class and a clock object to the initial state
  does not affect
  the original part of the state, and defining @mte@ of the additional clocks to be 
the infinity value @INF@ ensures that the new clocks don't modify the timed behavior of the 
(original) system.
  Furthermore, the transformation replaces each original rule by a
  number of new rules, such that (1) each new rule acts on the
  original state part as the original rule, and (2) for each original
  rule and each extended state to which the original rule is
  applicable there is exactly one new rule that is applicable. (1)
  assures that the new rewrites yield the same result for the original
  part of the state and (2) assures that no original paths are blocked by the new
  rules. Thus the transformation does not modify the original
  behavior.\\[2ex]
  ``$\rightarrow$'': Let $\{t_0\}
  \stackrel{r_0}{\rightarrow} \{t_1\}
  \stackrel{r_1}{\rightarrow} \ldots $ be a path of $\cal
  R$. We define 
  $$\widetilde{t}_i = 
  t_i 
\texttt{ < } \cpq \texttt{ : Clock | clock : }x_i\texttt{, status : }y_i \texttt{ >}$$

\noindent  for all $i$  with $\; x_i \in \mathbb{T}_{\cal{R},\texttt{Time}}\;$ and $\;y_i
  \in \mathbb{T}_{\cal{R},\texttt{OnOff}}\;$ given inductively as
  follows:
  \begin{itemize}
  \item $x_0 = 0$, and $y_0=\texttt{on}$ if $p \in \LBR(\{t_0\}) \land
    q \not\in \LBR(\{t_0\})$ and $y_0=\texttt{off}$ otherwise. 
  \item For all $i$, if there is a tick rule yielding the rewrite $\{t_i\}
    \stackrel{r_i}{\rightarrow} \{t_{i+1}\}$, then we distinguish
    between the following cases:
    \begin{itemize}
    \item If $y_i = \texttt{on}$ and $x_i\leq r$, then we define
      $y_{i+1} = \texttt{on}$ and $x_{i+1} = x_i + r_i$. \newline
      Note that with the
      definition of the \texttt{delta} equation we have
      $\{\widetilde{t}_i\}\stackrel{r_i}{\rightarrow}\{\widetilde{t}_{i+1}\}$.
    \item If $y_i = \texttt{on}$ and $x_i > r$, then we define
      $y_{i+1} = \texttt{on}$ and $x_{i+1} = x_i$. \newline
      Note that with the
      definition of  \texttt{delta}  we have
      $\{\widetilde{t}_i\}\stackrel{r_i}{\rightarrow}\{\widetilde{t}_{i+1}\}$.
    \item Else, if $y_i = \texttt{off}$, we define
      $y_{i+1} = \texttt{off}$ and $x_{i+1} = x_i$. \newline
      With the
      definition of  the \texttt{delta} equation we have
      $\{\widetilde{t}_i\}\stackrel{r_i}{\rightarrow}\{\widetilde{t}_{i+1}\}$.
    \end{itemize}
  \item For all $i$, otherwise there is an instantaneous rule
   $t\; @=>@\; t' @ if @cond\;$ or $\;@{@ t@} => {@t'@} if @cond$, yielding the
    rewrite $\{t_i\} \stackrel{r_i}{\rightarrow}\{t_{i+1}\}$ with $r_i
    = 0$.
    \begin{itemize}
    \item If $y_i = \texttt{on}$ and $\{t_{i+1}\} \texttt{ |= } q \texttt{ =/= true}$
      then we set $y_{i+1} = \texttt{on}$ and $x_{i+1}=x_i$. \newline
      Note that
      the first replacement of the original rule yields
      $\{\widetilde{t}_i\}\stackrel{r_i}{\rightarrow}\{\widetilde{t}_{i+1}\}$.
    \item If $y_i = \texttt{on}$ and $\{t_{i+1}\} \texttt{ |= } q$
      then we set $y_{i+1} = \texttt{off}$ and $x_{i+1}=x_i$. \newline
      Note that
      the second replacement of the original rule yields
      $\{\widetilde{t}_i\}\stackrel{r_i}{\rightarrow}\{\widetilde{t}_{i+1}\}$.
    \item If $y_i = \texttt{off}$, $\{t_{i+1}\} \texttt{ |= } p$, and
      $\{t_{i+1}\} \texttt{ |= } q \texttt{ =/= true}$ then we set
      $y_{i+1} = \texttt{on}$ and $x_{i+1}=0$. \newline
      Note that the third
      replacement of the original rule yields
      $\{\widetilde{t}_i\}\stackrel{r_i}{\rightarrow}\{\widetilde{t}_{i+1}\}$.
    \item Else, if $y_i = \texttt{off}$ and either $\{t_{i+1}\} \texttt{
        |= } q$ or $\{t_{i+1}\} \texttt{ |= } p \texttt{ =/= true}$ then
      we set $y_{i+1} = \texttt{off}$ and $x_{i+1}=x_i$. \newline
      Note that the fourth
      replacement of the original rule yields
      $\{\widetilde{t}_i\}\stackrel{r_i}{\rightarrow}\{\widetilde{t}_{i+1}\}$.
    \end{itemize}
  \end{itemize}
  Above we made use of the fact that by definition for each $i$, the
  corresponding labeling  $\tildeLBR(\{\widetilde{t}_i\})$
  in $\widetilde{\cal{R}}$ is equal to
  $\LBR(\{t_i\})$.  Clearly, all $\{\widetilde{t}_i\}$
  are states of $\widetilde{\cal R}$. Especially,
  $\{\widetilde{t}_0\}$ results from $\{t_0\}$ by the $\transBR$-transformation.  Thus
  $\{\widetilde{t}_0\} \stackrel{r_0}{\rightarrow}
  \{\widetilde{t}_1\} \stackrel{r_1}{\rightarrow} \ldots$ is a
  path of
  $\widetilde{\cal R}$.\\[2ex]
  ``$\leftarrow$'': Given a path $\{\widetilde{t}_0\}
  \stackrel{r_0}{\rightarrow} \{\widetilde{t}_1\}
  \stackrel{r_1}{\rightarrow} \ldots$ of $\widetilde{\cal R}$ such that
   $$\widetilde{t}_i = 
  t_i 
\texttt{ < } \cpq \texttt{ : Clock | clock : }x_i\texttt{, status : }y_i \texttt{ >}$$
\noindent for each $i$, we show that $\{t_0\}
  \stackrel{r_0}{\rightarrow} \{t_1\}
  \stackrel{r_1}{\rightarrow} \ldots $ is a path of $\cal
  R$.
  \begin{itemize}
  \item For all $i$, if
    $\{\widetilde{t}_i\}\stackrel{r_i}{\rightarrow}\{\widetilde{t}_{i+1}\}$
    can be gained by a tick rule in $\widetilde{\cal R}$, then clearly also 
    $\{t_i\}\stackrel{r_i}{\rightarrow}\{t_{i+1}\}$
    can be gained by a tick rule in $\cal R$.
  \item Otherwise if
    $\{\widetilde{t}_i\}\stackrel{r_i}{\rightarrow}\{\widetilde{t}_{i+1}\}$
    can be gained by an instantaneous rule in $\widetilde{\cal R}$, then
    the original rule which got replaced by the above one yields
    $\{t_i\}\stackrel{r_i}{\rightarrow}\{t_{i+1}\}$ 
    in $\cal R$.
  \end{itemize}
\end{proof}

The following lemma clarifies the semantics of the bounded response
property: On the one hand, if along a path after a $p$ event $r$ time
long no $q$ event occurs, then the path is a counterexample for the
property. On the other hand, if a path violates the bounded response
property, then either after a $p$ event $r$ time long no $q$ event
occurs, or the path is time-convergent and violates the unbounded
property $\globally\ (p\ \rightarrow (\finally\ q))$.
\begin{lem}
  \label{lem:br_eq}
  Let $\cal{R}$ be  a real-time rewrite theory, $\LBR$ with $p,q\in\Pi$ a labeling
  function for $\cal R$, and  $\pi = \{t_0\} \stackrel{r_0}{\rightarrow} \{t_1\}
  \stackrel{r_1}{\rightarrow} \ldots $ a path of $\cal R$.
Then 
\[
   \left[ \exists i,j.\ 0\leq i < j \ \land\ (\pi^i \models p) \
  \land\ \left(\forall i\leq k \leq j.\ \pi^k\not\models q\right) \ \land \
  \sum_{k=i}^{j-1}r_k > r\right] \quad \longrightarrow \quad 
\left[\pi \not\models 
  \globally\ (p\ \rightarrow (\finally_{\leq r}\ q)) \right] 
\]
and 
\[
\begin{array}{l}
  \left[ \pi \not\models 
  \globally\ (p\ \rightarrow (\finally_{\leq r}\ q))\right]
  \ \ \longrightarrow \ \
  \left[ \exists i,j.\ 0\leq i < j \ \land\ (\pi^i \models p) \
  \land\ \left(\forall i\leq k \leq j.\ \pi^k\not\models q\right) \ \land \
  \sum_{k=i}^{j-1}r_k > r\right]\\
  \phantom{\left[ \pi \not\models 
  \globally\ (p\ \rightarrow (\finally_{\leq r}\ q))\right]
  \ \ \longrightarrow \ \
  }
  \lor \ \left[\pi\not\models \globally\ (p\ \rightarrow (\finally\ q))\right]\ .
\end{array}
\]
\end{lem}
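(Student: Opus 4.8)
The plan is to unfold the pointwise MTL semantics of the bounded response formula and of its negation and then argue each implication directly; the only ingredient beyond the definitions is the standing tick-invariance assumption, which I will use through one observation: along $\pi$, any step $t_\ell \xrightarrow{r_\ell} t_{\ell+1}$ across which the truth value of $q$ changes must have $r_\ell = 0$ (a tick step preserves the value of the propositions occurring in the formula, and every other step is instantaneous). First I would record that, by the semantics of $\until_{[0,r]}$ and $\globally$, we have $\pi \models \globally\,(p \rightarrow \finally_{\leq r}\,q)$ iff for every $m \in \nats$ with $\pi^m \models p$ there is an $n \geq m$ with $\pi^n \models q$ and $\sum_{k=m}^{n-1} r_k \leq r$; dually, $\pi \not\models \globally\,(p \rightarrow \finally_{\leq r}\,q)$ iff there is an $m$ with $\pi^m \models p$ such that for all $n \geq m$ either $\pi^n \not\models q$ or $\sum_{k=m}^{n-1} r_k > r$ (and similarly for the unbounded formula $\globally\,(p \rightarrow \finally\,q)$, dropping the time constraint).

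For the first implication, assume its hypothesis: there are $i < j$ with $\pi^i \models p$, with $\pi^k \not\models q$ for all $i \leq k \leq j$, and with $\sum_{k=i}^{j-1} r_k > r$. I would show that $m := i$ witnesses the negation above. Suppose otherwise, so some $n \geq i$ has $\pi^n \models q$ and $\sum_{k=i}^{n-1} r_k \leq r$. If $i \leq n \leq j$, this contradicts $q$-freeness on $[i,j]$. If $n > j$, then since all durations are non-negative $\sum_{k=i}^{n-1} r_k \geq \sum_{k=i}^{j-1} r_k > r$, a contradiction. Hence $\pi^i \models p$ and $\pi^i \not\models \finally_{\leq r}\,q$, so $\pi$ falsifies the bounded response formula.

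For the second implication, fix a witness $m$ for the negation. If $\pi^n \not\models q$ for all $n \geq m$, then $\pi^m \models p$ while $\pi^m \not\models \finally\,q$, so $\pi \not\models \globally\,(p \rightarrow \finally\,q)$ and the right disjunct of the conclusion holds. Otherwise let $\ell$ be the least index with $\ell \geq m$ and $\pi^\ell \models q$. I claim $\ell \geq m+2$. If $\ell = m$, then taking $n = m$ gives $\sum_{k=m}^{m-1} r_k = 0 \leq r$, so the witness property forces $\pi^m \not\models q$, contradicting $\ell = m$. If $\ell = m+1$, then $q$ changes value across $t_m \xrightarrow{r_m} t_{m+1}$, so $r_m = 0$ by the observation, and taking $n = m+1$ gives $\sum_{k=m}^{m} r_k = 0 \leq r$, forcing $\pi^{m+1} \not\models q$, again a contradiction. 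By the same observation applied to $t_{\ell-1} \xrightarrow{r_{\ell-1}} t_\ell$ we get $r_{\ell-1} = 0$. Finally, the witness property at $n = \ell$ gives $\sum_{k=m}^{\ell-1} r_k > r$, whence $\sum_{k=m}^{\ell-2} r_k = \sum_{k=m}^{\ell-1} r_k - r_{\ell-1} = \sum_{k=m}^{\ell-1} r_k > r$; setting $i := m$ and $j := \ell-1$ (so $i < j$, since $\ell \geq m+2$, and $\pi^k \not\models q$ for all $i \leq k \leq j$ by minimality of $\ell$) yields exactly the left disjunct of the conclusion.

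The step I expect to be the main obstacle is the boundary bookkeeping in the second implication: a priori the first $q$-state after $m$ might sit just past the $r$-deadline, so that every strictly earlier $q$-free interval has accumulated duration $\leq r$ and the left disjunct fails. Tick-invariance is precisely what rules this out, because it forces the transition into that first $q$-state — and, in the degenerate case, the transition into $t_{m+1}$ — to be instantaneous; dropping that zero-duration step from the interval neither shrinks its accumulated duration below the bound nor spoils $q$-freeness on the closed range $[i,j]$.
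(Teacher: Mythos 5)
Your proof is correct and follows essentially the same route as the paper's: unfold the pointwise semantics of the (negated) formula, handle the easy direction by exhibiting the witness index directly, and for the harder direction take the least $q$-index $\ell$ after the witness $m$ and use tick-invariance to conclude that the step into $t_\ell$ is instantaneous, so dropping it preserves the accumulated duration $> r$. Your explicit argument that $\ell \geq m+2$ is a welcome extra: the paper only establishes $l > i$ before setting $j = l-1$, which a priori yields $j \geq i$ rather than the required $i < j$, and your tick-invariance argument ruling out $\ell = m+1$ is exactly what closes that small gap.
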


\begin{proof}
For the first implication, due to the semantics of MTL the following holds:
\[
\begin{array}{ll}
\exists i,j.\ 0\leq i < j \ \land\ (\pi^i \models p) \
  \land\ \left(\forall i\leq k \leq j.\ \pi^k\not\models q\right) \ \land \
  \sum_{k=i}^{j-1}r_k > r \ . & \rightarrow  \\ 
  \exists i.\ (\pi^i \models 
  p)\ \land\ 
  \forall j\geq i.\ \left(\sum_{k=i}^{j-1}r_k \leq r \rightarrow 
    \pi^j\not\models q\right) & \rightarrow \\
  \exists i.\ (\pi^i \models 
  p)\ \land\ 
  (\pi^i \not\models \finally_{\leq r}\ q) & \rightarrow \\
  \exists i.\ \pi^i \models 
  \neg (p\ \rightarrow 
  (\finally_{\leq r}\ q)) & \rightarrow \\
  \pi \not\models 
  \globally\ (p\ \rightarrow (\finally_{\leq r}\ q)) \ .
\end{array}
\]
For the other direction, 
\[
\begin{array}{ll}
  \pi \not\models 
  \globally\ (p\ \rightarrow (\finally_{\leq r}\ q)) 
  & \rightarrow \
  \exists i.\ \pi^i \models 
  \neg (p\ \rightarrow 
  (\finally_{\leq r}\ q)) \\
  & \rightarrow\
  \exists i.\ (\pi^i \models 
  p)\ \land\ 
  (\pi^i \not\models \finally_{\leq r}\ q) \\
  &  \rightarrow\
  \exists i.\ (\pi^i \models 
  p)\ \land\ 
  \forall j\geq i.\ \left(\sum_{k=i}^{j-1}r_k \leq r \rightarrow 
    \pi^j\not\models q\right) \ .
\end{array}
\]
Let $i$ be such an index with $\pi^i \models p$ and $\forall j\geq i.\
(\sum_{k=i}^{j-1}r_k \leq r \rightarrow \pi^j\not\models q)$. If
$\pi\not\models \globally\ (p\ \rightarrow (\finally\ q))$ then we are
ready. So assume $\pi\models \globally\ (p\ \rightarrow (\finally\
q))$, implying that there is a smallest index $l\geq i$ with
$\pi^l\models q$. From the above it follows that  $\sum_{k=i}^l r_k > r$.

Note that by definition $r>0$ and thus $l>i$. Let $j=l-1$.  From the
minimality of $l$ we first conclude that $\forall i\leq k \leq j.\
\pi^k\not\models q$. From the minimality of $l$ we furthermore
conclude that the rewrite
$\{\widetilde{t}_j\}\rightarrow\{\widetilde{t}_l\}$ is an
instantaneous step, and thus $\sum_{k=i}^j r_k = \sum_{k=i}^l r_k >
r$.  That means, 
\[
  \exists i,j.\ 0\leq i < j
\ \land\ (\pi^i \models p) \ \land\ \left(\forall i\leq k \leq j.\
  \pi^k\not\models q\right) \ \land \ \sum_{k=i}^{j-1}r_k > r \ .
\]

\end{proof}

The following main theorem formalizes the correctness of our transformation:
Firstly, if the bounded response property holds, then the model
checking algorithms will not provide any counterexample. Secondly, if
the bounded response model checking algorithm does not find any
counterexample, and if there are no time-convergent counterexamples,
then the property holds.
\begin{theorem}
\label{lem:extensionBR}
Let $\cal R$ be a real-time rewrite theory, $\LBR$ a labeling function
for $\cal R$ with $p,q\in\Pi$, and $\{t_0\}$ an initial state of $\cal
R$. Let $\widetilde{\cal{R}}$, $\tildeLBR$, and $\{\widetilde{t}_0\}$
be the result of the $\transBR$-transformation applied to $\cal R$,
$\LBR$, and $\{t_0\}$.  Then
  $$
  {\cal R},\LBR,\{t_0\} \models \globally\ (p\ \rightarrow
  (\finally_{\leq r}\ q)) \quad \longrightarrow \quad \widetilde{\cal
      R},\tildeLBR,\{\widetilde{t}_0\} \models \globally\
    (\clockvaluesem(\cpq) \leq r),
  $$
and 
  $$
  \widetilde{\cal
    R},\tildeLBR,\{\widetilde{t}_0\} \models (\globally\ (p \rightarrow (\finally\ q))) \land (\globally\
  (\clockvaluesem(\cpq) \leq r))
  \quad \longrightarrow \quad 
  {\cal R},\LBR,\{t_0\} \models \globally\ (p\ \rightarrow
  (\finally_{\leq r}\ q)) ,
  $$
  where $\clockvaluesem(\cpq)$ denotes the value of the @clock@
  attribute of the clock object $\cpq$.
\end{theorem}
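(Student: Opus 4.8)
The plan is to prove both implications by contraposition, in each case reducing to the semantic characterisation of bounded response in Lemma~\ref{lem:br_eq} and the path correspondence in Lemma~\ref{lem:protectedBR}. The glue between these is a small collection of invariants describing how the added clock $\cpq$ evolves along any path of $\widetilde{\cal R}$. So I would first fix a pair of corresponding paths $\widetilde\pi=\{\widetilde t_0\}\xrightarrow{r_0}\{\widetilde t_1\}\xrightarrow{r_1}\cdots$ of $\widetilde{\cal R}$ and $\pi=\{t_0\}\xrightarrow{r_0}\{t_1\}\xrightarrow{r_1}\cdots$ of $\cal R$ (related by Lemma~\ref{lem:protectedBR}), write $\widetilde t_i=t_i\ \langle\cpq:\texttt{Clock}\mid\texttt{clock}:x_i,\ \texttt{status}:y_i\rangle$, and record that, by construction of $\tildeLBR$, the propositions $p$ and $q$ have the same truth value at $\{\widetilde t_i\}$ and at $\{t_i\}$. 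I would then establish three facts about $x_i,y_i$. \textbf{(A)} If $x_n>r$ for some $n$, then there are indices $i<j$ with $\pi^i\models p$, with $\pi^k\not\models q$ for all $i\leq k\leq j$, and with $\sum_{k=i}^{j-1}r_k>r$. \textbf{(B)} Whenever $\pi^n\models p\wedge\neg q$, we have $y_n=\texttt{on}$. \textbf{(C)} If $i<j$, $\pi^i\models p$, $\pi^k\not\models q$ for all $i\leq k\leq j$, and $\sum_{k=i}^{j-1}r_k>r$, then $x_m>r$ for some $m\leq j$.

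For \textbf{(A)} I would take the least $n$ with $x_n>r$; since $x_0=0$ we have $n>0$, and inspecting the \texttt{delta} equation and the four instantaneous replacement rules shows that the value of \texttt{clock} is only ever increased by a tick step applied while the status is \texttt{on}, so the step into $n$ is such a tick and $y_{n-1}=y_n=\texttt{on}$. Letting $i$ be the first index with $y_i=\cdots=y_n=\texttt{on}$, the transformation guarantees $x_i=0$ and $\pi^i\models p\wedge\neg q$ (either from the initialisation when $i=0$, or from the third replacement rule, the only one that switches the clock on). Instantaneous steps leave \texttt{clock} unchanged on $[i,n]$ and each tick there adds its duration until the value is frozen, hence $x_n\leq\sum_{k=i}^{n-1}r_k$, forcing $\sum_{k=i}^{n-1}r_k>r$ and $i<n=:j$ (recall $r>0$); and since the clock stays \texttt{on} throughout $[i,n]$, an easy induction, using \emph{tick-invariance} for the tick steps and the ``stay \texttt{on}'' replacement rule for the instantaneous steps, gives $\pi^k\not\models q$ for $i\leq k\leq n$. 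For \textbf{(B)} I would walk back from $n$ over the maximal block of preceding tick steps; tick-invariance keeps $p\wedge\neg q$ true on that whole block and ticks preserve \texttt{status}, so it suffices to look at the one instantaneous step entering the block — or the initial state, where $y_0=\texttt{on}$ holds precisely when $p\wedge\neg q$ does — and in each case (clock previously \texttt{on} or \texttt{off}) the applicable replacement rule leaves the clock \texttt{on}. For \textbf{(C)}, if $x_i>r$ take $m=i$; otherwise $\pi^i\models p\wedge\neg q$ gives $y_i=\texttt{on}$ by \textbf{(B)}, the clock stays \texttt{on} and is never reset on $[i,j]$ because the reset rule only fires from status \texttt{off}, so each step there raises \texttt{clock} by $r_k$ as long as the value is $\leq r$; were the value to stay $\leq r$ up to $j$ we would get $x_j=x_i+\sum_{k=i}^{j-1}r_k>r$, a contradiction, so $x_m>r$ for some $m\leq j$.

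With \textbf{(A)}--\textbf{(C)} the theorem is short. For the first implication, suppose $\widetilde{\cal R},\tildeLBR,\{\widetilde t_0\}\not\models\globally\,(\clockvaluesem(\cpq)\leq r)$; choose a path with $x_n>r$, apply \textbf{(A)}, and conclude by the first implication of Lemma~\ref{lem:br_eq} that the corresponding $\pi$ satisfies $\pi\not\models\globally\,(p\rightarrow\finally_{\leq r}\ q)$, hence ${\cal R},\LBR,\{t_0\}\not\models\globally\,(p\rightarrow\finally_{\leq r}\ q)$. For the second implication, suppose ${\cal R},\LBR,\{t_0\}\not\models\globally\,(p\rightarrow\finally_{\leq r}\ q)$ and pick a violating path $\pi$; the second implication of Lemma~\ref{lem:br_eq} gives two cases. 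If $\pi\not\models\globally\,(p\rightarrow\finally\ q)$, then the corresponding $\widetilde\pi$, having the same $p,q$-labels, also fails $\globally\,(p\rightarrow\finally\ q)$ and a fortiori fails the conjunction. Otherwise \textbf{(C)} applies to the indices $i<j$ supplied by the lemma, yielding $x_m>r$ somewhere on $\widetilde\pi$, so $\widetilde\pi$, and hence $\widetilde{\cal R}$, again fails the conjunction. Taking contrapositives gives the two stated implications.

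The main obstacle is \textbf{(B)}: it is the only point where one must argue that the transformation \emph{cannot} leave the clock \texttt{off} while $p\wedge\neg q$ holds, and it genuinely needs tick-invariance — to carry both the truth of $p\wedge\neg q$ and the \texttt{status} value across a whole block of tick rules — together with a careful case analysis of the four instantaneous replacement rules and of the initialisation. A lesser but real subtlety, in \textbf{(A)} and \textbf{(C)}, is checking that the ``\texttt{if T <= r then T + T' else T}'' cap in the \texttt{delta} equation does not stop the clock from ever exceeding $r$: it does exceed $r$, precisely because the freeze only takes effect \emph{after} a tick has already pushed the value strictly past $r$.
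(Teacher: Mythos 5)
Your proof is correct and follows essentially the same route as the paper's: both arguments proceed by contraposition, reduce the metric property to the purely combinatorial condition of Lemma~\ref{lem:br_eq}, transfer between $\cal R$ and $\widetilde{\cal R}$ via Lemma~\ref{lem:protectedBR}, and then track the clock along a single path. Your facts \textbf{(A)} and \textbf{(C)} correspond exactly to the two halves of the paper's proof. The one genuine organizational difference is your invariant \textbf{(B)} (``$p\wedge\neg q$ implies the clock is \texttt{on}''): where you invoke \textbf{(B)} to get the clock running at index $i$, the paper instead performs a case analysis on the history before $i$ --- using the minimality of $i$ and $j$ to argue that the clock is either \texttt{on} with value $0$ at $i=0$, or \texttt{off} at state $i-1$ and switched on with value $0$ by the step into $i$. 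Your invariant is arguably cleaner and spares you from having to establish $x_i=0$ at all (you only need $x_i\geq 0$ to conclude $x_j\geq\sum_{k=i}^{j-1}r_k$). You also treat one point more carefully than the paper does: the paper asserts that the clock value at state $j$ \emph{equals} $\sum_{k=i}^{j-1}r_k$, which is not literally true once the \texttt{if T <= r then T + T' else T} cap in \texttt{delta} has fired; your contradiction argument in \textbf{(C)} (if the value never exceeded $r$ it would have accumulated the full sum, which exceeds $r$) handles this correctly, and matches the remark you make at the end about the freeze only taking effect after the value has already passed $r$.
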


\begin{proof}
  For the first statement we show that 
  \[
  \widetilde{\cal
    R},\tildeLBR,\{\widetilde{t}_0\} \not\models \globally\
  (\clockvaluesem(\cpq) \leq r)
  \]
  implies 
  \[
  {\cal R},\LBR,\{t_0\}
  \not\models \globally\ (p\ \rightarrow (\finally_{\leq r}\ q))\ .
  \]
  Thus assume $ \widetilde{\cal R},\tildeLBR,\{\widetilde{t}_0\}
  \not\models \globally\ (\clockvaluesem(\cpq) \leq r)$. That means,
  there exists a path $\widetilde{\pi} = \{\widetilde{t}_0\}
  \stackrel{r_0}{\rightarrow} \{\widetilde{t}_1\}
  \stackrel{r_1}{\rightarrow} \ldots $ of $\widetilde{\cal R}$ with
  $\widetilde{t}_i = 
  t_i 
\texttt{ < } \cpq \texttt{ : Clock | clock : }x_i\texttt{, status : }y_i \texttt{ >}$
 and a smallest index $j$ such that
  $x_j>r$. Since the clock value is initially $0$ and it increases
  only due to tick rules if the clock is on, the clock must have been
  switched on at some point before $j$. Furthermore, since $j$ is
  minimal, the clock is continuously on from the last point where it
  was switched on till $\widetilde{t}_j$.

  Assume $i<j$ to be the smallest index such that the clock is
  continuously on from $\widetilde{t}_i$ till $\widetilde{t}_j$.
  Either $i$ is $0$ and the initial state satisfies $p\land \neg q$
  and $x_i=0$, or $i>0$ and the rewrite from the $(i-1)$th state to
  the $i$th state switched the clock from off to on and reset it to
  $0$. In the latter case the corresponding rewrite has the condition
  that $p\land\neg q$ holds in the $i$th state. Thus $p\land\neg q
  \land x_i=0$ holds in state $\widetilde{t}_i$.  The clock was kept
  on from state $\widetilde{t}_i$ till state $\widetilde{t}_j$. The
  only rules yielding this behavior are the tick rules increasing the
  clock value with the duration of the rewrite, and instantaneous
  rules assuring the invariance of $\neg q$ and letting the clock
  value untouched.  Due to tick-invariance, tick rules cannot
  cause any change in the validity of the propositions, and $\neg q$ holds
  all the way from the $i$th till the $j$th state.  Furthermore, the
  clock value at state $j$ is the sum of the durations of the rewrites
  from the $i$th to the $j$th state.  Thus
  \[
  \exists i,j.\ 0\leq i < j \ \land\ (\widetilde{\pi}^i \models p) \
  \land\ \left(\forall i\leq k \leq j.\ \widetilde{\pi}^k\not\models
    q\right) \ \land \ \sum_{k=i}^{j-1}r_k > r
  \]  
  holds and with Lemma \ref{lem:br_eq} we get $ \widetilde{\pi}
  \not\models \globally\ (p\ \rightarrow (\finally_{\leq r}\ q))$.
  Using Lemma~\ref{lem:protectedBR} we conclude that there is also a
  path $\pi$ of $\cal R$ such that
  $
  \pi \not\models 
  \globally\ (p\ \rightarrow (\finally_{\leq r}\ q))
  $
  and thus 
  $
  {\cal R}, \LBR,\{t_0\} \not\models 
  \globally\ (p\ \rightarrow (\finally_{\leq r}\ q))$.\\[2ex]
  For the second statement assume that 
\[
{\cal R},\LBR,\{t_0\}
  \not\models \globally\ (p\ \rightarrow (\finally_{\leq r}\ q))
\]
  holds. We show that it implies 
\[
 \widetilde{\cal
    R},\tildeLBR,\{\widetilde{t}_0\} \not\models (\globally (p
  \rightarrow (\finally\ q))) \land (\globally\ (\clockvaluesem(\cpq)
  \leq r))\ .
\]
  Due to the assumption there exists a path $\pi = \{t_0\}
  \stackrel{r_0}{\rightarrow} \{t_1\} \stackrel{r_1}{\rightarrow}
  \ldots $ of $\cal R$ violating $\globally\ (p\ \rightarrow
  (\finally_{\leq r}\ q))$.  Now, either $ \widetilde{\cal
    R},\tildeLBR,\{\widetilde{t}_0\} \not\models \globally (p
  \rightarrow (\finally\ q))$ and we are ready, or due to
  Lemma~\ref{lem:protectedBR} there exists a path $\widetilde{\pi}
  = \{\widetilde{t}_0\} \stackrel{r_0}{\rightarrow}
  \{\widetilde{t}_1\} \stackrel{r_1}{\rightarrow} \ldots $ of
  $\widetilde{\cal R}$ also violating $\globally\ (p\ \rightarrow
  (\finally_{\leq r}\ q))$.  With Lemma~\ref{lem:br_eq} we get
  \[
  \exists i,j.\ 0\leq i < j \ \land\ (\widetilde{\pi}^i \models p) \
  \land\ \left(\forall i\leq k \leq j.\ \widetilde{\pi}^k\not\models
    q\right) \ \land \ \sum_{k=i}^{j-1}r_k > r.
  \]
  Let $i$ and $j$ be the smallest indices satisfying the above
  condition. 
  \begin{itemize}
  \item If $i=0$ then by the fact that $\widetilde{\pi}^i\models p
    \land \neg q$ we have by definition that the clock in $\widetilde{t}_0$ is on
    and has the value $0$.
  \item If $i>0$ and for all $n<i$, $\widetilde{t}_n$ does not satisfy
    $p\land\neg q$, then by definition of the initial state the clock is
    initially off and the clock does not get switched
    on until the $(i-1)$th state, thus the clock is off in the $(i-1)$th state.
  \item If $i>0$ and there is an $n<i$ with $\widetilde{t}_n$
    satisfying $p\land\neg q$, then from the minimality of $i$ we conclude that
    there is a minimal $n\leq m < i$ such that $\widetilde{t}_m$
    satisfies $q$. From the minimality of $m$ we conclude that
    $\{\widetilde{t}_{m-1}\}\stackrel{r_{m-1}}{\rightarrow}\{\widetilde{t}_m\}$
    is due to an instantaneous rule, which, by definition, switches
    the clock off.
  \end{itemize}
  Thus either $i=0$ and the clock is on in $\widetilde{t}_i$ with value $0$, or $i>0$
  and the clock is off in state $\widetilde{t}_{i-1}$.  Furthermore,
  in the latter case the $(i-1)$th state satisfies $\neg p \lor q$
  (otherwise $i$ would not be minimal), and the rewrite
  $\{\widetilde{t}_{i-1}\}\stackrel{r_i}{\rightarrow}\{\widetilde{t}_i\}$ is
  due to an instantaneous rule, which, again by definition, switches
  the clock on and resets its value to $0$.

  We get that the clock is on with value $0$ in $\widetilde{t}_i$.  As $\neg q$
  holds all the way from the $i$th till the $j$th state, the clock
  remains on from the $i$th till the $j$th state. The rewrites of
  $\widetilde{\cal R}$ assure that the clock value in state $\widetilde{t}_j$ is the
  duration $\sum_{k=i}^{j-1}r_k$ that is by assumption larger than
  $r$, what was to be shown.
\end{proof}

The following lemma states that finiteness of the state space is
preserved under the $\transBR$-transformation, implying that our bounded
response model checking algorithm terminates for finite-space systems.

\begin{lem}
\label{lem:finiteBR}
Given a real-time rewrite theory $\cal R$, a labeling function
$\LBR$ of $\cal R$ with $p,q\in\Pi$, an initial state $\{t_0\}$ of
$\cal R$, and a fixed time sampling strategy, and furthermore, assuming that
  \begin{itemize}
  \item there are only finitely many states reachable in $\cal R$ from
    initial state $\{t_0\}$ with the given time sampling, i.e., the
    set
\[
\{\{t_i\}\ |\ \pi = \{t_0\}\stackrel{r_0}{\rightarrow}\{t_1\}\stackrel{r_1}{\rightarrow}\ldots \ \in Paths(\mathcal{R})_{t_0},\ i\in\nats
    \}
\]
 is finite, and
\item the number of different rewrite durations in all possible paths
  in $\cal R$ from $\{t_0\}$ under the given time sampling is finite, i.e., the
  set
\[
\{r_i\ | \
    \pi = \{t_0\}\stackrel{r_0}{\rightarrow}\{t_1\}\stackrel{r_1}{\rightarrow}\ldots \ \in Paths(\mathcal{R})_{t_0},\ i\in\nats \}
\]
    is finite,
  \end{itemize}
  then the bounded response model checking algorithm for
 $\cal{R}$ using the same sampling strategy terminates.
\end{lem}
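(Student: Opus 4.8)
The plan is to reduce the statement to a single geometric fact: that the \emph{reachable state space of $\widetilde{\cal R}$ from $\{\widetilde t_0\}$ under the given time sampling strategy is finite}. Once that is established, termination of the bounded response model checking algorithm follows from the standard fact that Maude's search command and LTL model checking (here of the fixed formula $\globally\,(p\rightarrow\finally\,q)\wedge\globally\,(\clockvaluesem(\cpq)\leq r)$, or of the single search pattern actually used by the tool) terminate whenever the reachable state space is finite. So the real work is to show that finiteness of the reachable states and of the set of rewrite durations of $\cal R$ transfers to $\widetilde{\cal R}$.

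First I would invoke Lemma~\ref{lem:protectedBR}: every path $\{\widetilde t_0\}\stackrel{r_0}{\rightarrow}\{\widetilde t_1\}\stackrel{r_1}{\rightarrow}\ldots$ of $\widetilde{\cal R}$ projects to a path $\{t_0\}\stackrel{r_0}{\rightarrow}\{t_1\}\stackrel{r_1}{\rightarrow}\ldots$ of $\cal R$ with the \emph{same} durations $r_i$, and with $\widetilde t_i = t_i\,o_i$ for a $\mathtt{Clock}$ instance $o_i$. Hence the set of durations occurring along paths of $\widetilde{\cal R}$ equals the set $D$ occurring along paths of $\cal R$ (finite by hypothesis), and the ``original part'' $\{t_i\}$ always lies in the finite reachable set $S$ of $\cal R$. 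Since $\mathtt{status}$ ranges over the two-element sort $\mathtt{OnOff}$, it only remains to bound the set $V$ of values the $\clockvalue$ attribute of $\cpq$ can take in a reachable state.

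The crux is the finiteness of $V$, and for this I would follow the $\clockvalue$ value along a path. It is $0$ initially; an instantaneous rule either leaves it untouched or resets it to $0$; a tick rule of duration $r_i$ either freezes it (when it is already $>r$, by the ``else $T$'' branch of the \texttt{delta} equation) or increases it to $T+r_i$ (when $T\leq r$). Thus every value the clock attains while still $\leq r$ is a finite sum of elements of $D$. Let $D_+ = D\setminus\{0\}$. If $D_+=\emptyset$ the clock never leaves $0$ and $V=\{0\}$; otherwise put $m=\min D_+>0$, and note that a sum of elements of $D$ bounded by $r$ can have at most $\lfloor r/m\rfloor$ nonzero summands (adding zeros does not change the value), so the set $V_{\leq r}$ of such sums is finite. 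The first tick that pushes the clock above $r$ turns some $v\in V_{\leq r}$ into $v+r_i$ with $r_i\in D$, and from then on (until a possible reset back to $0$) the value is frozen. Hence every reachable $\clockvalue$ value lies in the finite set $V := V_{\leq r}\cup\{\,v+d : v\in V_{\leq r},\ d\in D\,\}$.

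Putting the pieces together, every state reachable in $\widetilde{\cal R}$ from $\{\widetilde t_0\}$ lies in the finite set $\{\,\{t\;\langle \cpq : \mathtt{Clock}\mid \clockvalue : v,\ \mathtt{status}: y\rangle\} : \{t\}\in S,\ v\in V,\ y\in\{\mathtt{on},\mathtt{off}\}\,\}$; since $S$ and $V$ are finite, so is the reachable state space, and the algorithm terminates. I expect the finiteness of $V$ to be the only non-routine step — specifically the observation that a $D$-bounded sum of durations drawn from a finite set yields finitely many distinct values, combined with the fact that the capping in \texttt{delta} keeps values above $r$ from proliferating. Everything else is bookkeeping on top of Lemma~\ref{lem:protectedBR} and the form of the $\transBR$-rules.
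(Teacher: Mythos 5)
Your proof is correct and follows essentially the same route as the paper's: reduce termination to finiteness of the reachable state space of $\widetilde{\cal R}$, observe that instantaneous rules only preserve or reset the clock value, and use the finiteness of the set of rewrite durations to conclude that only finitely many clock values not exceeding $r$ can arise. The one (minor) difference is in handling values above $r$: the paper argues that once the clock exceeds $r$ the algorithm finds a counterexample and terminates early, whereas you show unconditionally that the capping in \texttt{delta} confines such values to the finite set $\{v+d : v\in V_{\leq r},\ d\in D\}$, giving finiteness of the whole reachable state space outright --- an equally valid and slightly stronger conclusion.
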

\begin{proof}
  Assume that the above conditions hold.  Notice that the bounded
  response model checking algorithm always terminates if the set of
  reachable states of the $\transBR$-transformation 
  (from its initial state and under the given time sampling) is finite.

  Since all instantaneous rules in the $\transBR$-transformation
  $\widetilde{\cal R}$ either leave the clock value untouched or reset
  the clock value to $0$, the finiteness of the state space is
  preserved under the instantaneous rules of $\widetilde{\cal R}$.
  For the tick rules, on the one hand, if the clock value gets larger
  than the bound $r$ in the bounded response formula, then the model
  checking algorithm finds a counterexample and thus terminates. On
  the other hand, since there are only finitely many possible rewrite
  durations, there are only finitely many possible clock values
  less than or equal to $r$. So if the clock value never exceeds $r$
  than the reachable state space of the $\transBR$-transformation remains
  finite and the algorithm terminates in this case, too.
\end{proof}

\iflong

\subsection{Minimum Separation}

\input{correctness_ms}

\fi

\section{Case Studies}
\label{sec:case_studies}

This section briefly presents two case studies where we 
use the new model checking commands. The analysis has been performed 
on a 2.4GHz Intel\textsuperscript{\textregistered} Core 2 Duo processor with
2 GB of RAM.

\subsection{A Network of Medical Devices}
We apply the new Real-Time Maude commands 
on a Real-Time Maude model of an interlock protocol for 
a small network or medical devices, integrating an X-ray machine, 
a ventilator machine,  and a controller. The example was proposed by Lui
Sha, and the Real-Time Maude model is explained in \cite{phuket08}.

The ventilator machine helps a sedated patient to breathe during a surgery. 
An X-ray can be taken during the surgery by pushing a button.
To allow an X-ray to be taken without blurring the picture,
the ventilator must be briefly turned off. Within a certain time bound, the X-ray
must be taken and then the ventilation machine must be restarted.
Furthermore, the ventilation machine should not be stopped too often. 
The model also addresses
nondeterministic message delays and 
clock \emph{drifts}.

 In this model, all events take place when some ``timer''
expires or when a message arrives. Therefore, as proved in~\cite{wrla06},   the
system can be analyzed  using the \emph{maximal}
 time sampling strategy which advances time until the  next timer expires, 
so that the analyses remain sound and complete. 
One time unit in the specification corresponds to one millisecond 
in the case study.

\paragraph{Bounded Response Analysis.}
One  requirement in this model is that 
``the ventilation machine should not pause for more
than two seconds at a time.'' This can be expressed by the 
 bounded response formula
\[ \Box \;(\mbox{``machine is pausing''} \;\longrightarrow\;
\Diamond_{\leq 2 sec} \:\mbox{``machine is breathing''}). \]

In order to analyze this property, we first define two 
state propositions, @isPausing@ and @isBreathing@, in the
expected way: @isPausing@ holds for states in which the ventilation machine
is not breathing, while @isBreathing@ holds when the ventilation machine is breathing.
%
%
%
The bounded response property is model checked using the
following Real-Time Maude command:

%




\small
\begin{alltt}
Maude> \emph{(br initState |= isPausing => <>le( 2000 ) isBreathing .)}
\end{alltt}
\normalsize

\noindent The result of this command is a path representing a counterexample
to the validity of the property:

\footnotesize
\begin{alltt}
Property not satisfied

Counterexample path:

\char123< ct : Controller | \clockvalue : 0, lastPauseTime : 0 >
< u : User | pushButtonTimer : 0, pushInterval : 60000 >
< vm : VentMachine | state : breathing >
< xr : X-ray | state : idle >\char125

=>[pushButton]

\char123< ct : Controller | \clockvalue : 0, lastPauseTime : 0 >
< u : User | pushButtonTimer : 60000, pushInterval : 60000 >
< vm : VentMachine | state : breathing >
< xr : X-ray | state : idle >
dly(pushButton,0,50,10)\char125 

=>[dlyMsgArrives]

...

=>[idle]

\char123< ct : Controller | \clockvalue : 44000/21, lastPauseTime : 3000 >
< u : User | pushButtonTimer : 1220000/21, pushInterval : 60000 >
< vm : VentMachine | state : stopBreathing(9000/7)>
< xr : X-ray | state : idle >\char125

=>[tick]

\char123< ct : Controller | \clockvalue : 11000/3, lastPauseTime : 3000 >
< u : User | pushButtonTimer : 170000/3, pushInterval : 60000 >
< vm : VentMachine | state : stopBreathing(0)>
< xr : X-ray | state : idle >\char125
\end{alltt}
\normalsize

\noindent The result shows that the bounded response requirement does not hold.
This is due to the fact that  the 
ventilation machine may pause for 2.22 seconds, since its
internal clock is a little slow (see \cite{phuket08}). 
A counterexample path is therefore produced, of which we display here only a part, showing the sequence of rules
that have been applied to reach a state where the clock
added internally to the system reaches a @clock@ value greater than 2000.
The analysis took less than a second to perform.

A similar analysis can be done to check whether the ventilation machine cannot pause 
for more than 2.5 seconds. Since this property holds, the execution of the 
bounded response command will simply not stop, since the 
state space reachable from the initial state is not finite (i.e. due to the controller @clock@ attribute, which
just increases as time advances).
 
\paragraph{Minimum Separation Analysis.}
Another requirement says that the ventilator cannot pause more than once in ten minutes. 
That is, the  \emph{minimum separation} between two pauses is ten minutes. This property can be model checked
in Real-Time Maude as follows:

%


\small
\begin{alltt}
Maude> \emph{(ms initState |= isPausing separated by >= 600000 .)}
\footnotesize
Property not satisfied

Counterexample path:

\char123< ct : Controller | \clockvalue : 0, lastPauseTime : 0 >
< u : User | pushButtonTimer : 0, pushInterval : 60000 >
< vm : VentMachine | state : breathing >
< xr : X-ray | state : idle >\char125

=>[pushButton]

...

=>[stopBreathing]

\char123< ct : Controller | \clockvalue : 5951000/9, lastPauseTime : 663000 >
< u : User | pushButtonTimer : 530000/9, pushInterval : 60000 >
< vm : VentMachine | state : stopBreathing(2000)>
< xr : X-ray | state : wait(2500/3)>\char125
\end{alltt}
\normalsize

\noindent The requirement does not hold and a counterexample path is produced in 
less than 10 secs, leading to a state where the internal
@Clock@ object reaches a @clock@  value smaller than $600000$, while its status is @off@.

\subsection{A Four-Way Traffic Intersection System}
In this section, we analyze a bounded response property of an object-oriented 
Real-Time Maude model of a distributed fault-tolerant 
four-way traffic light controller for cars and pedestrians
described in~\cite{traffic-light}.  
The traffic light system for the 4-way intersection is designed as a collection of autonomous concurrent
objects that interact with each other by asynchronous message passing. 
The system is highly parametric: ten different parameters can be specified for an initial state, such 
as the presence of failures or emergency vehicles in the environment. 
Each 4-way intersection has two roads crossing in two directions: 
east-west (@EW@ in the specification) and north-south (@NS@ in the specification).
Each road has its own traffic lights. Each pedestrian light has a button that 
can be pushed by a pedestrian in order to get the green light and cross the street.
The behavior of the four-way intersection is as expected.

We  focus on the requirement
 that ``no pedestrian should wait for more than five minutes'' to cross a road. This corresponds to 
the bounded response formula
\[ \Box \;(\mbox{``pedestrian pushes the button''} \;\longrightarrow\;
\Diamond_{\leq 5 min} \:\mbox{``pedestrian light is green''}). \]
In order to analyze this property, we use the state propositions @buttonPushed@
and @pedLightGreen@ that take as parameter the direction of the crosswalk.
In  less than 3 minutes, we successfully verified that the pedestrian 
does not have to wait for more than 15 time units by executing the following Real-Time Maude command
(a time unit corresponds
to 15 seconds):

\small
\begin{alltt}
Maude > \emph{(br init("Imoan", minGreenTime + 2, minRedTime, 0, 0, 0, 1, 1, false, 0) 
            |= buttonPushed(NS) => <>le( 15 ) pedLightGreen(NS) .)}
          
Property satisfied
\end{alltt}
\normalsize

\noindent Furthermore, executing the same command, but for 14 time units, returned a counterexample.

\section{Related Work}
\label{sec:related}

There are several works determining decidable fragments of timed
temporal logics (e.g., \cite{bouyer:phd,quaknine:mtl}) in order to support
model checking algorithms for real-time systems.  The tools
\textsc{Kronos}~\cite{kronos} and \textsc{REDLIB}~\cite{wang:redlib}
are two TCTL (timed CTL) model checkers for timed automata. The
popular timed-automaton-based tool
\textsc{Uppaal}~\cite{uppaalTutorial} provides model checking only for
a ``reachability subset'' of TCTL that does not include bounded
response or minimum separation.

The contrast to our work is already explained in the introduction. 
Whereas the timed automaton formalism is quite restrictive
for the exact purpose of achieving  decidability of analyses,
Real-Time Maude, and even  its flat object-oriented subset considered in this paper,
 is a much more expressive model. The cost of this expressiveness is of course
that most properties are in general undecidable for Real-Time Maude. So also for 
the model checking commands in this paper, which are not guaranteed to terminate for
many Real-Time Maude models. Furthermore, since for dense time, Real-Time Maude
executes the tick rules according to a  time sampling strategy, we must also
prove that, even when terminating, our model checking analyses are both sound and complete, 
using, e.g.,  the techniques in~\cite{wrla06}.
Another obvious difference is that
we are covering only a fairly small, but important,  subset of a MTL.  


\newcommand{\ignore}[1]{}

\ignore{
In this paper we deal with model checking MTL properties for real-time
systems. Decidability of MTL over finite timed words was shown
in~\cite{quaknine:mtl}. First steps towards robust model-checking of
nested MTL properties for timed automata is described
in~\cite{Bouyer:robustmodel-checking}.  An overview on linear-time
temporal logics for real-time systems, covering subsets and
extensions of MTL and including some decidability and undecidability
results (for the pointwise and for the continuous semantics), can be
found in \cite{bouyer:phd}.  The complexity of model-checking formulas
of MTL, MITL, and TCTL over restricted sets of timed paths is the
topic of~\cite{raskin:restricted}.

Related work on model checking timed properties of real-time systems
include \emph{timed-automata based} tools for TCTL, a timed extension
of CTL.  

\textsc{Kronos}~\cite{kronos} is a verification tool for real-time
systems developed at \textsc{Verimag}. In \textsc{Kronos}, components
of real-time systems are modeled by timed automata.  Correctness
requirements are expressed in TCTL. The model checking algorithm
implemented in \textsc{Kronos} for checking TCTL for timed automata is
based upon a \emph{symbolic} representation of the infinite state
space by sets of linear constraints.

\textsc{Uppaal}~\cite{uppaalTutorial} is an integrated tool
environment for modeling, validation, and verification of real-time
systems. Systems are specified as networks of timed automata, extended
with data types (bounded integers, arrays, etc.). The tool offers
different alternatives for state space representation, e.g., using
difference bound matrices, under-, and over-approximations. It
provides an efficient \emph{symbolic} model checking procedure for a
subset of non-nested TCTL properties. Nested TCTL properties are not
supported by \textsc{Uppaal}.

\textsc{REDLIB}~\cite{wang:redlib} is a library constructed out of the
TCTL model checker \textsc{RED}. It uses BDD-like diagrams for the
efficient representation and manipulation of dense-time state-spaces.
Full TCTL model checking is provided for timed automata.
}

\section{Concluding Remarks}
\label{sec:conclusion}

This paper has explained how we have enriched 
the important class of flat object-oriented Real-Time Maude models 
with model checking features  for bounded response and
minimum separation properties. 

Object-oriented Real-Time Maude 
specifications capture many systems
that cannot be specified as timed automata; indeed, all advanced
Real-Time Maude applications have been so specified. It is therefore
not  surprising that the model checking problems we address are undecidable
in general. Therefore, our model checking analyses may fail to terminate, although they will
terminate if the properties do \emph{not} hold.  Furthermore, 
our model checking commands are executed with a selected time sampling strategy,
so  that only a subset of all possible behaviors are analyzed. Hence, 
our analyses may be incomplete or unsound. Nevertheless, for object-oriented
specifications we have identified easily checkable conditions that ensure
soundness and completeness of (untimed) model checking. Further on the positive
side, we have shown that (with reasonable assumptions on the treatment of dense time),
our model checking analyses terminate when the reachable state space is finite. 

The implementation of our  model checking procedures 
follows a transformational approach that takes advantage of Maude's high
performance search command by transforming an MTL model
checking problem into checking the validity of an invariant property. 
We proved the correctness of these transformations under mild conditions,
such as tick-invariance and time divergence. 

The  model checking commands have been integrated into Real-Time Maude and 
have been successfully used to model check 
a small network of medical devices~\cite{phuket08}, as well as on a larger model of a
traffic intersection system~\cite{traffic-light}.

The present work is just our first foray into 
model checking metric temporal logic properties for Real-Time Maude specifications. 
Much work remains ahead. First of all, we should extend the class
of MTL formulas we can model check, and extend the classes of
Real-Time Maude models for which such model checking can be performed. 
For example, if the present techniques could be extended to 
\emph{non-flat} (or \emph{hierarchical} ``Russian dolls'') object-oriented
Real-Time Maude specifications, then we would get for free model checkers
for these properties for both behavioral AADL models and hierarchical Ptolemy II DE models.
We should also extend the commands to analyze only paths up to a certain duration, 
so that the reachable state space becomes finite. 
The correctness proofs in this paper all deal with correctness w.r.t.\ the executed paths.
We must of course further investigate the soundness and completeness of
such analyses w.r.t.\ all possible behaviors of a system.  

\paragraph{Acknowledgments.} We thank the anonymous reviewers for very 
helpful comments on a previous version of this paper, and 
gratefully acknowledge financial support by the Research Council of Norway
through the Rhytm project, and by the Research Council of Norway and the 
German Academic Exchange Service (DAAD) through the DAADppp project 
 "Hybrid Systems Modeling and Analysis
with Rewriting Techniques (HySmart)."

\ignore{
The general transformational approach for model checking MTL properties, 
presented in this paper, can be applied to other classes of 
MTL properties. We plan for the future to extend this approach to 
other MTL properties, such as classes of nested bounded until 
(i.e., $(a\ \until_{\leq r_b}\ b)\ \until_{\leq r_c}$ ), or properties
that are required by other case studies. It would be also
interesting to extend the bounded response and
minimum separation commands to general interval 
bounds on the temporal operator appearing in the formula.
Furthermore, since the model checking procedures are available
in Real-Time Maude as new commands, it will be easy and 
interesting to test them on more case studies. 
}

\ifworking
{\begin{center} \Large \bf Still \thetodo\ things to fix! \end{center}}
\fi

\bibliographystyle{eptcs} 
\bibliography{bib/string,bib/papers,bibl,bib/crossref}

\iflong
\else
%


\fi

\end{document}